\newtheorem{Thm}{Theorem}[section]
\newtheorem{theorem}[Thm]{Theorem}
\newtheorem{corollary}[Thm]{Corollary}
\newtheorem{lemma}[Thm]{Lemma}
\newtheorem{conjecture}[Thm]{Conjecture}
\newtheorem{remark}{Remark}[section]
\newtheorem{definition}[Thm]{Definition}
\newcommand{\beq}{\begin{equation}}
\newcommand{\eeq}{\end{equation}}
\def\supp{{\rm supp\,}}
\newenvironment{proof}[1][Proof]{\begin{trivlist}
\item[\hskip \labelsep {\bfseries #1}]}{\end{trivlist}}
\newcommand{\qed}{\nobreak \ifvmode \relax \else
      \ifdim\lastskip<1.5em \hskip-\lastskip
      \hskip1.5em plus0em minus0.5em \fi \nobreak
      \vrule height0.75em width0.5em depth0.25em\fi}
\def\cA{{\cal A}}
\def\cD{{\cal D}}
\def\cH{{\cal H}}
\def\cO{{\cal O}}
\def\cS{{\cal S}}
\def\cR{{\cal R}}
\newcommand{\RR}{\mathbb{R}}   %Latex2e commands for blackboard bold letters
\begin{document}

\title{A Goldstone Theorem in Thermal Relativistic Quantum Field Theory }

\pacs{81T08, 82B21, 82B31, 46L55} 
\keywords{Goldstone theorem, thermal field theory, KMS states. }

\author{Christian D.\ J\"akel}
\email{christian.jaekel@mac.com}
\affiliation{School of  Mathematics, Cardiff University, Wales, \\
CF24 4AG, United Kingdom.}%Lines break automatically or can be forced with \\

\author{Walter F. Wreszinski}
 \email{wreszins@gmail.com, supported in part by CNPq}
\affiliation{%
Departamento de F\'isica Matem\'atica,  
Instituto de F\'isica, \\
USP, Caixa Postal 66318  
05314-970, S\~ao Paulo, Brazil.}

\begin{abstract}
We prove a Goldstone Theorem in thermal relativistic quantum field theory, which relates 
spontaneous symmetry breaking to the rate of space-like decay of the two-point function. 
The critical rate of fall-off coincides with that of the massless free scalar field theory. Related results and open problems are briefly discussed.
\end{abstract}

\maketitle

\section{Introduction and Summary}

Thermal quantum field theory ({\em tqft}) has received considerable attention recently, both from the conceptual 
and the constructive point of view (see \cite{1} for a review and references). Its range of applications extends from heavy ion collisions and cosmology at early stages (see~\cite{2} for a review) to the present day hot big-bang model in cosmology \cite{3}, with obvious potential relevance to the dark energy problem \cite{4}, which, however, remains to 
be explored. 

In the present paper, we study the spontaneous symmetry breaking ({\em ssb}) of continuous (internal) symmetries in relativistic thermal quantum field theory, and prove a version of Goldstone's theorem (see, e.g., \cite{5}
for a review and references) --- Theorem \ref{Th1} of Section~\ref{III} --- which relates {\em ssb} to the asymptotic decay of 
(truncated) correlation functions for large space-like distances. In this respect the theorem follows the lines of 
\cite{6} and \cite{7}, the latter having been proved to be an optimal version, generalizing the well-known 
Mermin-Wagner theorem of quantum statistical mechanics \cite{8}. If, however, one endeavors to understand
the concept and structure of particles in {\em tqft}, large time-like distances necessarily come into play, and in this 
connection the Goldstone-type theorem of Bros and Buchholz \cite{9} is more natural (see Remark \ref{Rm1} of 
Section~\ref{III} and Section~\ref{IV} --- discussion and outlook). 

The main advantage of our approach lies in the possibility of a sharp distinction between massive and 
zero-mass theories in terms of their correlation functions' rate of space-like decay (Conjecture \ref{Con1} 
of Section~\ref{III}):
if such is true, a theorem of the same form as the vacuum (T=0, zero density) version \cite{10a, 10b} follows 
(Corollary \ref{Cor1}) and Theorem \ref{Th1} turns out to be optimal as in the quantum statistical mechanical case. 

Our proof of Theorem \ref{Th1} generalises the method used in \cite{6, 7}, which was based on the Bogoliubov
inequality (see \cite{11} and the references given there), in an essential way: firstly, the treatment of the 
middle term in that inequality (see (82) of Appendix A) relies on local current conservation, Einstein causality and the definition (\ref{8}) of the global charge, in a manner reminiscent of \cite{12}; secondly, unlike in \cite{6, 7}, 
we employ a form of the Bogoliubov inequality which was proved to follow from the KMS condition for infinite systems
by Garrison and Wong \cite{13} in a $C^*$-algebraic framework. This  naturally takes into account the singular nature 
of the quantum fields, which reflects itself in the necessity of choosing adequate test functions. For other related derivations, 
see \cite{14} and \cite{15}, and \cite{16}, Vol.~II, pg. 334.

We now briefly describe the organisation of the paper. In Section~\ref{II} we introduce the framework, which is that of 
$C^*$-dynamical systems (see e.g.~\cite{24}) and formulate our assumptions, together with some auxiliary lemmas.
In Section~\ref{III} we prove our main result (Theorem~\ref{Th1}), followed by Conjecture~\ref{Con1} and 
Corollary~\ref{Cor1} referred to before. The connection to $W^*$-dynamical systems and the spectral properties of the Liouvillian is also discussed there. 
Section~\ref{IV} is reserved for a discussion and outlook, in particular the relation to other approaches and open problems.
In Appendix~A we state Bogoliubov's inequality, with some additions needed in the main text. In Appendix~B we state, for the reader's convenience, the theorem on the partition of unity used in the main text. 

\section{Framework and Assumption}
\label{II}

We work in the framework of $C^*$-dynamical systems, 
consisting of a pair $(\cA, \tau)$, where $\cA$ is a $C^*$-algebra with unit and $\{ \tau_t \}_{t \in \RR}$ is a
one-parameter group of norm continuous (time translation) automorphisms of $\cA$ (see, e.g., \cite{24}). 
Since the time-translation automorphisms  are not norm continuous
on the Weyl algebra (see \cite{16}, Vol.~2, Theorem 5.2.8), we adopt Haag's construction (see \cite{17}, pp.~129
et seq.), which leads to the following structure (see \cite{17}, III.3.3., pg.~141) - $\cO$ denotes a finite, contractible, 
open region in Minkowski space: 

\begin{itemize}
\item [(i)] a net of $C^*$-algebras with common unit 
	\beq
		\label{1}
		\cO \to \cA_S (\cO)
	\eeq
with the total $C^*$-algebra (the $C^*$-inductive limit \cite{18}) $\cA_S$:
	\beq
		\label{2}
			\cA_S = \overline {\cA_L} , 
				\qquad 
		\hbox{with} \quad  \cA_L = \bigcup_{\cO} \cA_S ( \cO) ,
	\eeq
where the bar denotes the completion in the norm topology. We call $\cA_L$ the (strictly) local algebra. The 
action of the time automorphism $\{ \tau_t \}_{t \in \RR}$ on $\cA_S$ is $t$-continuous in the norm topology (in \cite{17}, this 
is required for the space-time automorphisms, but we do not need the smoothness with respect to spatial translations); 

\item [(ii)] a set $\cS$ of physical (i.e., locally normal \cite{17, 18}) states over $\cA_S$ and the complex linear span of $\cS$, denoted by $\Sigma$; 

\item [(iii)] the dual of $\Sigma$ is a net of $W^*$-algebras with common unit 
\beq
\label{3}
\cO \mapsto \cR(\cO) = \Sigma(\cO)^* .
\eeq
$\cR(\cO)$ is closed in the weak topology induced by $\Sigma$ and $\cA_S (\cO)$ is weakly dense in $\cR (\cO)$. 
%Due to local normality, the restriction of any state $\omega \in \cS$ to $\cA_S (\cO)$ is primary.

\item [(iv)] Local commutativity: if the regions $\cO_1$ and $\cO_2$ are totally space-like to one another, then 
\beq
\label{4}
[ A , B] = 0  \qquad \forall A \in \cA(\cO_1), \quad \forall B \in \cA(\cO_2).
\eeq
\end{itemize}
We must also assume certain global properties on the particular state $\omega \in \cS$ we shall work with. 
The basic property of thermal states is the KMS condition (see \cite{16}, Vol.~2):

\begin{definition} A state $\omega $ $(= \omega_\beta)$ over $\cA $ $(=\cA_S)$ is called a KMS state for some $\beta >0$, 
if for all $A, B \in \cA$ there exists a function $F_{A, B}$,  which is continuous in the strip $0 \le \Im z \le \beta$
and analytic and bounded in the open strip $0 < \Im z < \beta$, with boundary values given by 
\beq
\label{5}
F_{A, B} (t) = \omega(A \tau_t (B)) 
\eeq
and
\beq
F_{A, B} (t+i \beta) = \omega( \tau_t (B) A) 
\eeq
for all $t \in \RR$. 
\end{definition}

We further assume that

\begin{itemize}
\item [{\bf A1}] $\omega$ is a factor (primary) state over $\cA_S$; 
\item [{\bf A2}] $\omega$ satisfies the KMS condition.
\end{itemize}

From {\bf A2} it follows that $\omega$ is invariant under time translations, but we also need that

\begin{itemize}
\item [{\bf A3}] $\omega$ is invariant under space translations. 
\end{itemize}

By {\bf A2}, {\bf A3} and the GNS construction there exists a representation $\pi_\omega$ of $\cA_S$ on a Hilbert space $\cH_\omega$, with self-adjoint space-time translation generators $(L_\omega, \vec P_\omega)$ and cyclic vector 
$\Omega_\omega$ such that
\beq
\label{6ab}
L_\omega \Omega_\omega = 0 ,
\eeq
and
\beq
\label{6c}
\vec P_\omega \Omega_\omega = \vec 0 . 
\eeq

As occurs with $W^*$-dynamical systems, $L_\omega$ is not bounded below, see (74) et seq..
Of primary concern to us will be continuous internal symmetries generated by a local current~$J_\mu (x)$ on which we make
the same assumptions as in \cite{10a} (see p.~110), headed there under local current conservation. Before stating them we remark that in the following, when $A \in \cA_L$ occurs in connection with the 
representation $\pi_\omega$, it is understood as $\pi_\omega(A)$. The assumptions are:  there exist for 
every test-function $f \in \cD = C_0^\infty (\RR^{s+1})$ a set of  $(s+1)$~unbounded self-adjoint operators $J_\mu (f)$, with the properties 

\begin{itemize}
\item [{\bf A4}] $\Omega_\omega$ is in the domain of $J_\mu (f)$  for all $f \in \cD$;
\item [{\bf A5}] $T(a) J_\mu (f) T(a)^{-1} = J_\mu (f_a)$ where $f_a (x) = f (x-a)$; 
\item [{\bf A6}] $\sum_{\mu = 0}^s J_\mu \bigl( \frac{\partial f}{\partial x_\mu} \bigr) = 0$;
\item [{\bf A7}] (a) $(\Omega_\omega , [ J_\mu (f) , A ] \Omega_\omega) = 0$ for $A \in \cA_S (\cO)$, if
the support of $f$ is totally space-like to~$\cO$;

\item [] (b)
$(\Omega_\omega , [ J_0 (f), \vec J (g) ] \Omega_\omega) =\vec 0 $, 
if the supports of $f$ and $g$ are space-like to one another; 
\item [{\bf A8}]
for all $f \in \cD$, the charge operator $J_0 (f)$ is affiliated to $\cR (\cal{O})$.
\end{itemize}

\noindent
In {\bf A7} (a) the natural definition 
\begin{eqnarray}
\label{7}
(\Omega_\omega , [ J_\mu (f) , A ] \Omega_\omega) &= &(J_\mu (f) \Omega_\omega , A \Omega_\omega) 
\nonumber \\
&& \qquad - (A^* \Omega_\omega , J_\mu (f) \Omega_\omega)
\end{eqnarray}
is adopted. By {\bf A4} and (\ref{7}), the term
\beq
(\Omega_\omega , [ J_0 (f), \vec J (g) ] \Omega_\omega)
\eeq
is well-defined for all $f, g \in \cD$. {\bf A7}(b) follows from the assumption e.)~of \cite{10a} (see pg.~110) that 
$(\Omega_\omega , [ J_0 (f), \vec J (g) ] \Omega_\omega)$ is a tempered distribution, but we only need {\bf A7}(b). 

Assumption {\bf A8} had to be imposed on $J_\mu (f)$, because the KMS condition (\ref{5})  a priori holds
only for $A, B \in \cA$. Recall that $\cR(\cO)$ is the von Neumann algebra defined in (\ref{3}), and the concept of affiliation is defined in 
\cite{16} (see Vol.~1, Definition 2.5.7, pg.~87). By self-adjointness of $J_0 (f)$ and \cite{16}, Lemma~2.5.8 (see Vol.~1, pg.~87) the spectral projections $E(\lambda)$ of $J_0 (f)$ lie in $\cR (\cO)$. Note that it is too
much to require that they lie in $\cA_S (\cO)$: bounded functions of the fields are expected to belong {\it only} to the 
weak closure of the Weyl algebra, and are thus not smooth elements. 

We have now completed our list of assumptions, and turn to our criterion of existence of~{\em ssb}, which is the same as the one adopted in \cite{10a}. One might expect that the limit $V \to \infty$ of the local integrated current density 
\beq
\label{charge}
\int_V {\rm d}^s \vec x \; J_0 (x_0, \vec x)  
\eeq
defines a global charge operator, which serves as the generator of the internal symmetry considered. 
However, the limit $V \to \infty$ of (\ref{charge}) does not exist due to vacuum fluctuation occurring all over space, by translation
invariance: this is, in fact, as remarked in \cite{12}, the source of {\em ssb}. We therefore
define, as in \cite{10a}, the charge operator corresponding to $J_\mu$ as a suitable limit of the operators
\beq
\label{8}
J_0 (f_d \otimes g_R) := 
\int {\rm d}^{s+1} x \; f_d (x_0) \, g \left( \frac{\vec x }{R} \right) J_0 (x)
\eeq

as $R \to \infty$, where
\beq
\label{9a}
g \in \cD_s := C_0^\infty (\RR^s),
\eeq
\beq
\label{9b}
f_d \in \cD := C_0^\infty (\RR),
\eeq
\beq
\label{9c}
\int {\rm d} x_0 \; f_d (x_0) = 1,
\eeq
\beq
\label{9d}
 f_d (x_0)=0 \quad \hbox{if} \quad |x_0| \ge d ,
 \eeq
\beq
\label{9e}
g (\vec x)=1 \quad \hbox{if} \quad |\vec x| \le 1 ,
\eeq
\beq
\label{9f}
g (\vec x)=0 \quad \hbox{if} \quad |\vec x| > 1 + \delta , \quad 0 < \delta < 1  ,
\eeq
with $\vec J (f \otimes g_R)$ defined similarly. For other choices, see ref. [35]. The symmetry is characterized by the following property (see \cite{10a}, p.~111):
there exists a one-parameter group of automorphisms $A \mapsto A^\lambda$ of $\cA_S$, strongly continuous with respect to $\lambda$ , such that:

\begin{itemize}
\item [(a)] if
$\cO_L = \{ x \in \RR^{s+1}\mid | \vec x| + | x_0| < L \}$, $ L >0$, then 
\beq
A_L \in \cA_S (\cO_L) \quad \hbox{implies} \quad A_L^\lambda \in \cA_S (\cO_L) ;
\eeq
\item [(b)] if $f_d$, $g$ satisfy (\ref{9a})--(\ref{9f}) and $J_0 (f_d \otimes g_R)$ is defined by (\ref{8}), then 
\begin{eqnarray}
\label{10}
&&
\frac{{\rm d} }{ {\rm d} \lambda} (\Omega_\omega , A^\lambda \Omega_\omega) \Bigl|_{\lambda =0} 
\nonumber \\ [2mm]
&& \qquad 
= i \lim_{R \to \infty} (\Omega_\omega , [ J_0 (f_d \otimes g_R), A ] 
\Omega_\omega).
\end{eqnarray}
\end{itemize}

\begin{lemma}
About the limit on the r.h.s.~of (\ref{10}) we have 
\label{Lm1}
\begin{eqnarray}
\label{11}
&& \lim_{R \to \infty} (\Omega_\omega , [ J_0 (f_d \otimes g_R), A_L ] 
\Omega_\omega)
\nonumber \\ [2mm]
&& \qquad = 
(\Omega_\omega , [ J_0 (f_d \otimes g_{R_1}), A_L ] 
\Omega_\omega)
\end{eqnarray}
if 
\beq
\label{12}
R_1 \ge L + d +1.
\eeq
The r.h.s.~in (\ref{11}) is independent of the functions $f_d$ and~$g$, as long as they satisfy (\ref{9a})--(\ref{9f}).
\end{lemma}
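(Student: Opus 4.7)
The plan is to reduce both assertions of Lemma~\ref{Lm1} to Einstein causality in the form~\textbf{A7}(a), combined, where necessary, with the local current conservation~\textbf{A6}. The guiding observation is that the difference of two test functions of the shape $f_d\otimes g_R$ has spatial support in a shell that can be pushed far enough from $\cO_L$ to be totally space-like to it.

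\emph{Step 1: existence of the limit and formula (\ref{11}).} Take $R\ge R_1 \ge L+d+1$ and consider the smooth function
\[
h_R(x_0,\vec x) := f_d(x_0)\bigl(g_R(\vec x) - g_{R_1}(\vec x)\bigr).
\]
By (\ref{9e}) both $g_R$ and $g_{R_1}$ equal $1$ on $|\vec x|\le R_1$, so the spatial support of $h_R$ lies in $\{|\vec x|\ge R_1\}$; its time support lies in $\{|x_0|\le d\}$ by (\ref{9d}). For any $y = (y_0,\vec y)\in\cO_L$ one has $|\vec y|+|y_0|<L$, hence for any $x$ in $\supp h_R$
\[
|\vec x - \vec y|-|x_0-y_0| \;\ge\; |\vec x|-|\vec y|-|x_0|-|y_0| \;>\; R_1 - L - d \;\ge\; 1,
\]
so $\supp h_R$ is totally space-like to $\cO_L$. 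Applying \textbf{A7}(a) (with the definition (\ref{7}) of the commutator expectation) to $A_L\in\cA_S(\cO_L)$ yields
\[
\bigl(\Omega_\omega,[J_0(h_R),A_L]\Omega_\omega\bigr)=0 \qquad (R\ge R_1),
\]
which, by linearity of $f\mapsto J_0(f)$, is exactly the assertion that $R\mapsto (\Omega_\omega,[J_0(f_d\otimes g_R),A_L]\Omega_\omega)$ is already constant for $R\ge R_1$. Consequently the limit $R\to\infty$ exists and coincides with its value at any $R_1$ satisfying (\ref{12}).

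\emph{Step 2: independence of $g$.} If $g$ and $g'$ both obey (\ref{9a})--(\ref{9f}) (enlarging $\delta$ if necessary so that the same $\delta$ works for both), the difference $g_R-g'_R$ vanishes on $|\vec x|\le R$. Repeating the causality argument of Step~1 with $h_R$ replaced by $f_d\otimes(g_R-g'_R)$ shows that $(\Omega_\omega,[J_0(f_d\otimes(g_R-g'_R)),A_L]\Omega_\omega)=0$ for $R\ge L+d+1$, so the right-hand side of (\ref{11}) is the same for $g$ and $g'$.

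\emph{Step 3: independence of $f_d$ --- the main point.} Two choices $f_d$ and $f'_{d}$ both satisfy (\ref{9c}), so the difference $\varphi(x_0):=f_d(x_0)-f'_d(x_0)$ has compact support and $\int\varphi=0$. Hence the primitive $h(x_0):=\int_{-\infty}^{x_0}\varphi(s)\,\mathrm ds$ is smooth and compactly supported in the interval $|x_0|\le d$, and $\partial_0 h=\varphi$. Applying the current conservation assumption \textbf{A6} to the test function $h\otimes g_R\in\cD$ gives
\[
J_0(\varphi\otimes g_R) \;=\; J_0(\partial_0 h\otimes g_R) \;=\; -\sum_{i=1}^{s} J_i\bigl(h\otimes\partial_i g_R\bigr).
\]
By (\ref{9e})--(\ref{9f}) the gradient $\partial_i g_R$ is supported in the shell $R\le|\vec x|\le R(1+\delta)$, and $h$ is supported in $|x_0|\le d$. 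For $R\ge L+d+1$ the same elementary estimate as in Step~1 shows that each $h\otimes\partial_i g_R$ has support totally space-like to $\cO_L$, so \textbf{A7}(a) forces $(\Omega_\omega,[J_i(h\otimes\partial_i g_R),A_L]\Omega_\omega)=0$, and therefore $(\Omega_\omega,[J_0(\varphi\otimes g_R),A_L]\Omega_\omega)=0$. This proves the claimed independence of $f_d$.

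The only step that is not a direct application of space-like commutativity is the $f_d$-independence, which is the main obstacle: it requires the conservation law \textbf{A6} in order to trade the zero-mean part of $\varphi$ for the spatial currents whose test functions are supported in the far shell. Subject to that trick the whole proof reduces to the elementary geometric inequality $R_1>L+d$ used above.
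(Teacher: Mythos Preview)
Your argument is correct and is precisely the standard Kastler--Robinson--Swieca proof to which the paper defers (reference~\cite{10a}, Lemma~1). The paper's own proof consists of a single citation and mentions only \textbf{A7}(a); your Step~3 correctly brings in the current conservation \textbf{A6} as well, which is indeed needed for the $f_d$-independence and is used in the cited reference, though the paper's one-line proof does not name it explicitly.
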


\begin{proof}
This follows from local commutativity {\bf A7}(a), see 
Lemma 1, pg.~112, of \cite{10a}. 
\qed
\end{proof}

\noindent
{CRITERION.} Given (\ref{10}), we adopt as in \cite{10a} as our criterion for spontaneous breakdown of the symmetry ({\em ssb}) associated to the one-parameter group of automorphisms of~$\cA_S$: 
\beq
\label{13}
\lim_{R \to \infty} (\Omega_\omega , [ J_0 (f_d \otimes g_R), A_0 ] 
\Omega_\omega)
= c \ne 0 
\eeq
for some $A_0 \in \cA_L$. 

\bigskip

In this paper we shall assume that (\ref{13}) holds, and derive some constraints from it. 

\bigskip

There are two preliminary steps, which we shall prove in this section: firstly, using time translation invariance, 
which follows from {\bf A2}, we show that we may replace $J_0 (f_d \otimes g_R)$ in (\ref{13}) by a smoothened
version, essential to the application of Bogoliubov's inequality in Section~\ref{III}. Secondly, using {\bf A3} we show that 
$A_0$ may be replaced by an average over space-translations, which is essential to relate (\ref{13}) to the rate of 
space-like decay of correlations.

\begin{lemma} 
\label{Lm2}
Assume {\em ssb} takes place, in the sense that (\ref{13}) holds. Then there exists $h \in \cD ({\mathbb R})$ such that (\ref{13}) also holds (with a different $c \ne 0$)
for the observable
\beq
\label{14}
\widetilde A_0 := \int {\rm d} t \; h(t) \tau_t (A_0) .
\eeq
\end{lemma}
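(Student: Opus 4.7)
The plan is to shift the time-smearing from $A_0$ onto the current by exploiting the $\tau$-invariance of $\Omega_\omega$, and then to invoke Lemma~\ref{Lm1} to conclude that the limit survives, rescaled by $\alpha:=\int h(t)\,{\rm d}t$. Consequently, any $h\in\cD(\RR)$ with $\alpha\ne 0$ — for instance a nonnegative bump function — will do.

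First I would interchange the $t$-integration with the inner product and commutator,
\[
(\Omega_\omega,[J_0(f_d\otimes g_R),\widetilde A_0]\Omega_\omega)=\int {\rm d}t\, h(t)\,(\Omega_\omega,[J_0(f_d\otimes g_R),\tau_t(A_0)]\Omega_\omega),
\]
which is legitimate because $t\mapsto\tau_t(A_0)\Omega_\omega$ is norm continuous (since $\tau$ is norm continuous on $\cA_S$ and $A_0\in\cA_L$), $h$ has compact support, and assumption {\bf A4} together with (\ref{7}) makes each integrand a jointly continuous and bounded function of $t$. Next, the time-translation invariance $L_\omega\Omega_\omega=0$ (from {\bf A2}) moves $\tau_{-t}$ onto the current, and {\bf A5} combined with a change of variable $x_0\mapsto x_0-t$ in (\ref{8}) identifies $\tau_{-t}(J_0(f_d\otimes g_R))$ with $J_0((f_d)_t\otimes g_R)$, where $(f_d)_t(x_0):=f_d(x_0+t)$. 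Fubini then yields
\[
(\Omega_\omega,[J_0(f_d\otimes g_R),\widetilde A_0]\Omega_\omega)=(\Omega_\omega,[J_0(F\otimes g_R),A_0]\Omega_\omega),
\]
with $F(x_0):=\int {\rm d}t\,h(t)\,f_d(x_0+t)\in\cD(\RR)$, compactly supported, and $\int F\,{\rm d}x_0=\alpha$.

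To conclude, observe that $\alpha^{-1}F$ satisfies the conditions (\ref{9b})--(\ref{9d}) (with possibly a larger value of the support parameter $d$), so Lemma~\ref{Lm1} yields $\lim_{R\to\infty}(\Omega_\omega,[J_0(\alpha^{-1}F\otimes g_R),A_0]\Omega_\omega)=c$. By linearity the limit in (\ref{13}) with $A_0$ replaced by $\widetilde A_0$ equals $\alpha\cdot c\ne 0$, giving the new nonzero constant. The main technical obstacle is the careful handling of the unbounded operator $J_0(f_d\otimes g_R)$ in the two interchanges and in the covariance identity $\tau_{-t}(J_0(f_d\otimes g_R))=J_0((f_d)_t\otimes g_R)$; all of these reduce, via the definition (\ref{7}) and {\bf A4}, to scalar integrals of continuous integrands against a compactly supported $h$, and therefore require no hypothesis beyond those already listed in Section~\ref{II}.
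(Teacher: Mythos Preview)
Your argument is correct, and it takes a genuinely different route from the paper's proof. The paper fixes $R_1=L+d+1$ via Lemma~\ref{Lm1}, writes the commutator as the difference (\ref{17}) of two inner products, and then uses the norm continuity of $t\mapsto\tau_t(A_0)$ to choose $h=h_\epsilon$ an approximate delta so that $\widetilde A_0$ is norm-close to $A_0$; this makes (\ref{18}) small and hence the smeared commutator nonzero. In other words, the paper only exhibits \emph{some} $h$ (a sharp bump near the origin) for which the conclusion holds, and the resulting constant $c_\epsilon$ is merely known to be close to $c$.

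You instead move the time-smearing onto the current by $\tau$-invariance of $\Omega_\omega$ and covariance {\bf A5}, identify the result with $J_0(F\otimes g_R)$ for the convolution $F=h\ast f_d$, and then invoke the clause of Lemma~\ref{Lm1} that the limit is independent of the choice of $f_d$ satisfying (\ref{9b})--(\ref{9d}). This yields the exact value $\alpha c$ for \emph{every} $h\in\cD(\RR)$ with $\alpha=\int h\neq 0$, which is both sharper and more flexible. Your approach does rely on the linearity/continuity of $f\mapsto J_0(f)\Omega_\omega$ needed to pass from $\int h(t)\,J_0((f_d)_t\otimes g_R)\,{\rm d}t$ to $J_0(F\otimes g_R)$; this is not listed among {\bf A4}--{\bf A8}, but the paper itself uses the identical identification later at (\ref{42}) (applying {\bf A6} to $f=(h\star f_d)\otimes g_{R_0}$), so you are operating at the same level of rigor as the text.
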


\begin{proof}
By Lemma \ref{Lm1}
\begin{eqnarray}
\label{15}
&\lim_{R \to \infty} (\Omega_\omega , [ J_0 (f_d \otimes g_R), A_0 ] 
\Omega_\omega)
\nonumber \\ [2mm]
& \qquad =
 (\Omega_\omega , [ J_0 (f_d \otimes g_{R_1}), A_0 ] 
\Omega_\omega)
\end{eqnarray}
if
\beq
\label{16}
R_1 = L + d +1.
\eeq
Now write (\ref{15}) as
\beq
\label{17}
(J_0 (f_d \otimes g_{R_1}) \Omega_\omega , A_0   \Omega_\omega)
-
(A_0^* \Omega_\omega , J_0 (f_d \otimes g_{R_1})  \Omega_\omega).
\eeq
By (\ref{16}), (\ref{17})  and {\bf A4}, and the norm-continuity of the time evolution 
$\tau_t$ as assumed in (i), given $\epsilon >0$,  we may 
choose $h_\epsilon \in \cD$ such that 
\goodbreak
\begin{eqnarray}
\label{18}
& \kern -.5cm
\left|  (\Omega_\omega , [ J_0 (f_d \otimes g_R), A_0 ] 
\Omega_\omega) \right. \qquad 
\nonumber \\ [2mm]
& \qquad \qquad \left. -
 (\Omega_\omega , [ J_0 (f_d \otimes g_{R_1}), \widetilde A_0 ] 
\Omega_\omega) \right| < \epsilon .
\end{eqnarray}

See \cite {19}, Theorem 4.8, for the proof of (\ref{18}). The identities
(\ref{15}) and (\ref{18}) imply
\begin{eqnarray}
\label{19}
& \lim_{R \to \infty}  (\Omega_\omega , [ J_0 (f_d \otimes g_R), \widetilde A_0 ] 
\Omega_\omega)
\nonumber \\ [2mm]
& \qquad \qquad = (\Omega_\omega , [ J_0 (f_d \otimes g_{R_1}), \widetilde A_0 ] 
\Omega_\omega) = c_\epsilon \ne 0 \; . \quad
\end{eqnarray}
\qed
\end{proof}

In the following we ommit the suffixes $\epsilon $ in $h_\epsilon$ and $c_\epsilon$.

\begin{lemma}
\label{Lm3} 
Assume {\em ssb} takes place, in the sense that~(\ref{13}) holds.  Then
\begin{eqnarray}
\label{20}
&& \kern - 1cm
 \lim_{R \to \infty}  (\Omega_\omega , [ J_0 (f_d \otimes g_R), \widetilde A_0 ] 
\Omega_\omega)
\nonumber \\
&=&
 (\Omega_\omega , [ I_h (f_d \otimes g_{R_1}), A_0 ] 
\Omega_\omega) 
\nonumber \\
&=& c  \ne 0  ,
\end{eqnarray}
where 
\beq
\label{21}
I_h  (f_d \otimes g_R) :=
\int {\rm d} t \; h(t)  \tau_{-t} \bigl( J_0 (f_d \otimes g_{R}) \bigr).
\eeq
$h$ is the function in Lemma \ref{Lm2} and $R_1$ is given by (\ref{16}).
\end{lemma}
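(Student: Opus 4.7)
The plan is to combine Lemma~\ref{Lm2} with the time-translation invariance of $\omega$ in order to shift the time-smearing from the observable $A_0$ onto the current. By Lemma~\ref{Lm2}, applied to $\widetilde A_0$, the left-hand side of (\ref{20}) already equals some non-zero constant~$c$, so the whole content of the lemma is the identification of this limit with the quantity $(\Omega_\omega,[I_h(f_d\otimes g_{R_1}),A_0]\Omega_\omega)$.

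First I would rewrite the expectation value as a functional on $\cA_S$: since $\omega(X)=(\Omega_\omega,\pi_\omega(X)\Omega_\omega)$, assumption {\bf A2} gives $\omega\circ\tau_t=\omega$ for every $t\in\RR$, and hence, for each fixed $t$,
\[
\omega\bigl([J_0(f_d\otimes g_R),\tau_t(A_0)]\bigr)=\omega\bigl([\tau_{-t}(J_0(f_d\otimes g_R)),A_0]\bigr),
\]
by applying $\tau_{-t}$ inside $\omega$ and using that it is an automorphism of $\cA_S$ commuting with the commutator. Integrating this identity against $h(t)\,{\rm d}t$ and using Fubini (justified because $h\in\cD(\RR)$ has compact support, $\tau_t$ is norm-continuous by (i), and $J_\mu(f)\Omega_\omega$ lies in $\cH_\omega$ by {\bf A4}), I would obtain
\[
\omega\bigl([J_0(f_d\otimes g_R),\widetilde A_0]\bigr)=\omega\bigl([I_h(f_d\otimes g_R),A_0]\bigr),
\]
with $I_h(f_d\otimes g_R)$ as defined in (\ref{21}).

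Next I would observe that $I_h(f_d\otimes g_R)$ is itself a smeared current of the same structural type as $J_0(f_d\otimes g_R)$: a change of variable $y_0=x_0-t$ in the double integral gives $I_h(f_d\otimes g_R)=J_0\bigl((h\ast f_d)\otimes g_R\bigr)$, where the convolution $h\ast f_d$ is again an element of $\cD(\RR)$ with compact support contained in an interval of width at most $2d+2T_h$, with $T_h:=\sup\{|t|:t\in\supp h\}$. The spacetime support of the resulting smearing function is therefore contained in $\{|x_0|\le d+T_h,\;|\vec x|\le(1+\delta)R\}$. Applying {\bf A7}(a) exactly in the manner of Lemma~\ref{Lm1} then yields
\[
\omega\bigl([I_h(f_d\otimes g_R),A_0]\bigr)=\omega\bigl([I_h(f_d\otimes g_{R_1}),A_0]\bigr)
\]
as soon as $R\ge R_1$ is large enough to guarantee that the portion of the support of $(h\ast f_d)\otimes(g_R-g_{R_1})$ lies totally space-like to~$\cO_L$; absorbing $T_h$ into an enlarged effective~$d$, this is exactly the condition (\ref{16}). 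Combining the three displayed identities with Lemma~\ref{Lm2} then gives (\ref{20}).

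The only non-routine point is the justification of Fubini and of the rearrangement that turns $\tau_{-t}$ acting on $J_0(f_d\otimes g_R)$ into a smeared current with convolved test-function; I expect this to be the main obstacle, since $J_0(f_d\otimes g_R)$ is an unbounded operator and one must be careful that both sides of the resulting equality make sense as matrix elements in $\Omega_\omega$. This is handled by reading everything through the definition (\ref{7}), noting that $\tau_{-t}(J_0(f_d\otimes g_R))\Omega_\omega=J_0(\tau_{-t}(f_d\otimes g_R))\Omega_\omega$ by {\bf A5} combined with $L_\omega\Omega_\omega=0$, and then invoking strong continuity in $t$ of the resulting vector-valued map together with compactness of $\supp h$ to exchange the integral and the inner product.
\qed
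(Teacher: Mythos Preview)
Your proposal is correct and follows essentially the same route as the paper: the core step in both is to use time-translation invariance of $\omega$ (from {\bf A2}) to move the $h$-smearing from $A_0$ onto the current, after which the result is read off from Lemma~\ref{Lm2}. The paper's argument is terser, invoking only {\bf A4}, {\bf A5} and strong continuity of $t\mapsto {\rm e}^{-itL_\omega}J_0(f_d\otimes g_{R_1})\Omega_\omega$ to make sense of $I_h(f_d\otimes g_{R_1})\Omega_\omega$, and then citing (\ref{19}) directly. Your detour through the convolution $h\ast f_d$ and the re-application of Lemma~\ref{Lm1} is unnecessary: equation~(\ref{19}) already localises the left-hand side at $R_1$, so one may apply the time-shift identity at $R=R_1$ from the start rather than first proving it for general $R$ and then reducing.
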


\begin{proof}
By {\bf A4} and {\bf A5}, 
\beq
\label{23}
I_h  (f_d \otimes g_R) \Omega_\omega =
\int {\rm d} t \; h(t)  {\rm e}^{-i t L_\omega} J_0 (f_d \otimes g_{R_1})  \Omega_\omega,
\eeq
since $ t \mapsto {\rm e}^{-i t L_\omega} J_0 (f_d \otimes g_{R_1})  \Omega_\omega \in \cH_\omega$ is
continuous, (\ref{20}) is meaningful and follows from (\ref{19}) by time translation invariance of 
$\omega (.) := (\Omega_\omega , \, . \, \Omega_\omega)$. 
\qed
\end{proof}

Our last preliminary lemma makes use of {\bf A2}: 

\begin{lemma}
\label{Lm4}
Assume {\em ssb} takes place, in the sense that (\ref{13}) holds.  Then, for any $R_0 \in \RR$, 
\begin{eqnarray}
\label{24}
& \kern -1cm
\lim_{R \to \infty}  (\Omega_\omega , [ J_0 (f_d \otimes g_R), \widetilde A_0 ] 
\Omega_\omega)
\nonumber \\ [2mm]
&=
 (\Omega_\omega , [ I_h (f_d \otimes g_{\widetilde R_0}), A_{R_0} ] 
\Omega_\omega) = c \ne 0 ,    
\end{eqnarray}
where
\beq
\label{25}
\widetilde R_0 := 2 R_0 + L +d +1     
\eeq
\noindent
and 
\beq
\label{26}
A_{R_0}  = \frac{1}{| L_{R_0}|} \int_{L_{R_0}} {\rm d}^s \vec x \; \bigl( \sigma_{\vec x} (A_0) - \omega (A_0) \bigr).      
\eeq
Above, $L_{R_0}$ is a $s$-dimensional region of volume 
\beq
\label{27}
| L_{R_0}| = O ( R_0^s)  ,
\eeq
and $\sigma_{\vec x} (A_0) \equiv \pi_\omega ( \sigma_{\vec x} (A_0)) = {\rm e}^{i \vec P \vec x}
\pi_\omega ( A_0) {\rm e}^{-i \vec P \vec x}
$. 
\end{lemma}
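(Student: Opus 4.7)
The plan is to combine translation invariance of $\omega$ (assumption {\bf A3}) with local commutativity {\bf A7}(a) to show that, starting from Lemma \ref{Lm3}, one can simultaneously translate $A_0$ spatially and enlarge the spatial cutoff of the current so as to produce the spatial average $A_{R_0}$ appearing on the right-hand side of (\ref{24}).

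First, by Lemma \ref{Lm3} the left-hand side of (\ref{24}) equals $c = (\Omega_\omega,[I_h(f_d \otimes g_{R_1}), A_0]\Omega_\omega)$. Using $\omega \circ \sigma_{\vec x} = \omega$ together with the covariance of $J_0$ (which, in view of the commuting generators $L_\omega$ and $\vec P_\omega$ on $\cH_\omega$, lifts to covariance of $I_h$ as well), I would rewrite this commutator, for any $\vec x \in \RR^s$, as $(\Omega_\omega, [I_h(f_d \otimes (g_{R_1})_{\vec x}), \sigma_{\vec x}(A_0)]\Omega_\omega)$, where $(g_{R_1})_{\vec x}(\vec y) := g_{R_1}(\vec y - \vec x)$.

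The crucial step is then to replace the $\vec x$-dependent cutoff $(g_{R_1})_{\vec x}$ by the $\vec x$-independent cutoff $g_{\widetilde R_0}$. Writing $\phi_{\vec x} := g_{\widetilde R_0} - (g_{R_1})_{\vec x}$, the choice (\ref{25}) guarantees $\phi_{\vec x} \equiv 0$ on the ball $\{\vec y : |\vec y - \vec x| \leq L + d + 1\}$ uniformly for $\vec x \in L_{R_0}$. Consequently, the support of $f_d \otimes \phi_{\vec x}$ is totally space-like to the localization region of $\sigma_{\vec x}(A_0)$, so that {\bf A7}(a) forces $(\Omega_\omega, [I_h(f_d \otimes \phi_{\vec x}), \sigma_{\vec x}(A_0)]\Omega_\omega) = 0$, and hence $c = (\Omega_\omega, [I_h(f_d \otimes g_{\widetilde R_0}), \sigma_{\vec x}(A_0)]\Omega_\omega)$ for every $\vec x \in L_{R_0}$.

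Finally, I would average this identity over $\vec x \in L_{R_0}$ and pull the spatial integral inside the commutator (justified by linearity together with strong continuity of $\vec x \mapsto \sigma_{\vec x}(A_0)\Omega_\omega$ in $\cH_\omega$). Subtracting the c-number $\omega(A_0)\unit$, which commutes trivially with $I_h$, then produces precisely $A_{R_0}$ of (\ref{26}). The main obstacle is the uniformity of the local-commutativity step in $\vec x$: it requires a careful inspection of supports to confirm that the margin built into the definition (\ref{25}) of $\widetilde R_0$ really is sufficient for every $\vec x \in L_{R_0}$, together with a minor check that $I_h$—being a time-smearing of $J_0$ with compactly supported weight $h$—still obeys the space-like vanishing inherited from {\bf A7}(a).
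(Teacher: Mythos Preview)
Your proposal is correct and follows essentially the same route as the paper: the paper too starts from Lemma~\ref{Lm3}, applies space-translation invariance of $\omega$ to obtain $(\Omega_\omega,[I_h(f_d\otimes g_R^{\vec x}),\sigma_{\vec x}(A_0)]\Omega_\omega)$, and then invokes Lemma~\ref{Lm1} (i.e., {\bf A7}(a)) to replace the translated cutoff by $g_{\widetilde R_0}$ under the constraint $R_1\ge 2|\vec x|+L+d+1$, which is exactly your support analysis repackaged. Your explicit averaging-and-subtraction step is what the paper summarises in the single line ``(\ref{24}) and (\ref{25}) follow from (\ref{30a}), (\ref{30b}) and (\ref{26})'', and your remark about $I_h$ inheriting space-like vanishing corresponds to the paper's comment that the small support of $h$ is absorbed by slightly increasing $d$.
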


\begin{proof}
Applying Lemma \ref{Lm3}, (\ref{20}), and space-translation invariance {\bf A2}, 
\begin{eqnarray}
\label{28}
& \kern -1cm 
\lim_{R \to \infty}  (\Omega_\omega , [ J_0 (f_d \otimes g_R), \widetilde A_0 ] 
\Omega_\omega)
\nonumber \\ [2mm]
& =
 (\Omega_\omega , [ I_h (f_d \otimes g^{\vec x}_{R}), \sigma_{\vec x} (A_0) ] 
\Omega_\omega)    
\end{eqnarray}
for any $\vec x \in \RR^s$, where, by {\bf A5}, 
\beq
\label{29}
g_{R}^{\vec x} (\vec y)  = g_{R} (\vec y-  \vec x).
\eeq
We assume that as a consequence of finite speed of light, the tiny support of $h$ can be taken into account by slightly increasing $d$. By (\ref{28}), (\ref{29})
and Lemma \ref{Lm1}, 
\begin{eqnarray}
\label{30a}
& \lim_{R \to \infty}  (\Omega_\omega , [ J_0 (f_d \otimes g_R), \widetilde A_0 ] 
\Omega_\omega) \qquad 
\nonumber \\ [2mm]
& \qquad =
 (\Omega_\omega , [ I_h (f_d \otimes g_{R_1}), \sigma_{\vec x} (A_0) ] 
\Omega_\omega)    
\end{eqnarray}
as long as 
\beq
\label{30b}
R_1 \ge 2 | \vec x| + L + d +1
\eeq
(\ref{24}) and (\ref{25}) follow from (\ref{30a}), (\ref{30b}) and (\ref{26}). 
\qed
\end{proof}

(\ref{24}) is the starting point for proving our main results in the next section. 

\section{A Goldstone Theorem in Thermal  Field Theory}
\label{III}

As a preliminary to our proof of the Goldstone theorem, we write the self-adjoint operator in (\ref{24}):
\beq
\label{31}
J_0 (f_d \otimes g_{\widetilde R_0})
= \lim_{n \to \infty} \lim_m \sum_{j=1}^m 
\lambda_j' E (\lambda_{j-1}, \lambda_j]
\eeq
where
$\lambda_j' \in (\lambda_{j-1}, \lambda_j]$ 
is arbitrary, 
\beq
-n = \lambda_0 < \lambda_1 < \ldots < \lambda_m = n
\eeq 
and $\lim_m$ is the limit when 
$\max \{ |\lambda_j - \lambda_{j-1}|  \mid j =1, \ldots , m  \} $ tends to zero. This is the spectral theorem (see, e.g., 
\cite{20}, p.~342). We shall abbreviate the double limit in (\ref{31}) by $n,m \to \infty$, and denote the finite sum 
$\sum_{j=1}^m \lambda_j' \cdot E (\lambda_{j - 1} - \lambda_{j}]$
by 
$J_0^{n,m} (f_d \otimes g_{R_1})$. Above, 
\beq
E (\lambda_{j-1}, \lambda_j] = E (-\infty,\lambda_{j}]- E (-\infty,\lambda_{j-1}]
\eeq
are the (right-continuous) spectral projections associated to $J_0$. 
By {\bf A8} et seq., they belong to~$\cR(\cO)$, 
and thus
$ J_0^{n, m} (f_d \otimes g_{\widetilde R_0})\in \cR (\cO)$ for any finite $n, m$. The limit (\ref{31}) is 
understood to be acting on any vector in the domain of $J_0  (f_d \otimes g_{\widetilde R_1})$.

In correspondence to (\ref{24}) and (\ref{31}), we define
\beq
\label{32}
I_h^{n, m} := \int {\rm d}t \; h(t) \tau_{-t} \bigl(J_0^{n, m} (f_d \otimes g_{\widetilde R_0}) \bigr).
\eeq

By definition (\ref{32}), $I_h^{n, m}$ is a smooth element of $\cR(\cO)$ (ignoring once again the fact that one has to
increases the region $\cO$ by a small amount in order to accommodate for the spreading due to the convolution with $h$), and thus, by (iii), belongs to $\cA_S(\cO)$. 
We now apply Corrollary \ref{CorA1} of Appendix A, with $C= I_h^{n, m}$, $A = A_{R_0} = A_{R_0}^*$
(this may be assumed without loss of generality, otherwise we may decompose $A =S +i T$ with $S=S^*$, $T = T^*$. This 
leads to imaginary and real parts of $c$ in (\ref{24}).), and obtain from the Bogoliubov's inequality\eqref{A3}:
\begin{eqnarray}
\label{33}
&\kern -4cm
\frac{1}{\beta} \left| \omega \left( \left[ I_h^{n, m} , A_{R_0} \right] \right) \right|^2 
\nonumber \\ [2mm]
& \le 
\omega \left( \left[ I_h^{n, m} , i \left(\frac{ {\rm d} }{{\rm d}  t } \tau_t (I_h^{n, m})\right)\Bigl|_{t=0}   \right] \right)
\omega \left( A_{R_0}^2 \right) .
\end{eqnarray}

\begin{lemma}
\label{Lm5}
Under the assumptions made, 
\begin{eqnarray}
\label{34a}
&\kern -2cm
\lim_{n,m \to \infty}
\omega \left( \left[ I^{n,m}_h  , 
A_{R_0} \right] \right)
\nonumber \\ [2mm]
& = 
\omega \left( \left[ I_h (f_d \otimes g_{\widetilde R_0}) , 
A_{R_0}
\right] \right),
\end{eqnarray}
and 
\begin{eqnarray}
\label{34b}
&\kern -2cm
\lim_{n,m \to \infty}
\omega \left( \left[ I^{n,m}_h , 
i \left ( \frac{{\rm d}   }{{\rm d}t } \tau_t (I^{n,m}_h ) \right)^* \Bigl|_{ t=0}
\right] \right)
 \\ [2mm]
& = i \omega \left( \left[ I_h (f_d \otimes g_{\widetilde R_0}) , 
\int {\rm d}t \; h'(t) J_0 (f_{d}(\cdot-t)) \otimes g_{\widetilde R_0} ) \right] \right).
\nonumber
\end{eqnarray}
\end{lemma}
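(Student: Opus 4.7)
The plan is to reduce both assertions of Lemma~\ref{Lm5} to the single fact that $J_0^{n,m}(f_d \otimes g_{\widetilde R_0})\Omega_\omega \to J_0(f_d \otimes g_{\widetilde R_0})\Omega_\omega$ in $\cH_\omega$-norm as $n,m \to \infty$. This is the spectral theorem applied to the self-adjoint operator $J_0(f_d \otimes g_{\widetilde R_0})$, for which $\Omega_\omega$ lies in the domain by assumption {\bf A4}. From this vector convergence everything else should follow by unitarity of $e^{-itL_\omega}$ and dominated convergence in the time integral.

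For (\ref{34a}), I would first expand the commutator via the definition (\ref{7}):
\begin{eqnarray*}
\omega\bigl([I_h^{n,m}, A_{R_0}]\bigr) = \bigl(I_h^{n,m}\Omega_\omega,\, A_{R_0}\Omega_\omega\bigr) - \bigl(A_{R_0}\Omega_\omega,\, I_h^{n,m}\Omega_\omega\bigr),
\end{eqnarray*}
which makes sense because $I_h^{n,m}$ is bounded and $A_{R_0}$ is self-adjoint. By the same argument as in (\ref{23}), $I_h^{n,m}\Omega_\omega = \int dt\, h(t)\, e^{-itL_\omega} J_0^{n,m}\Omega_\omega$. The integrand converges pointwise in $t$ to $h(t) e^{-itL_\omega} J_0(f_d \otimes g_{\widetilde R_0})\Omega_\omega$ and is dominated by $|h(t)|\,\sup_{n,m}\|J_0^{n,m}\Omega_\omega\|$, which is finite because the approximating sequence is norm-Cauchy in $\cH_\omega$. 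Dominated convergence yields $I_h^{n,m}\Omega_\omega \to I_h(f_d \otimes g_{\widetilde R_0})\Omega_\omega$, and continuity of the inner product then delivers (\ref{34a}).

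For (\ref{34b}), I would first compute the derivative explicitly. Since each $I_h^{n,m}$ is bounded, a change of variables in (\ref{32}) gives, after norm-differentiation at $t=0$,
\begin{eqnarray*}
\frac{d}{dt}\tau_t(I_h^{n,m})\Big|_{t=0} = \int ds\, h'(s)\, \tau_{-s}(J_0^{n,m}),
\end{eqnarray*}
and since $h'$ is real and $J_0^{n,m}$ is self-adjoint, this operator coincides with its own adjoint. So the second argument of the commutator in (\ref{34b}) is $i\int ds\, h'(s)\tau_{-s}(J_0^{n,m})$. Expanding the commutator again via (\ref{7}) and applying dominated convergence exactly as in the previous step, now with $h$ replaced by $h'$ on both sides of the bracket, the limit $n,m \to \infty$ passes into the $s$-integral and yields the same expression with $J_0^{n,m}$ replaced by $J_0(f_d \otimes g_{\widetilde R_0})$. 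Finally, {\bf A5} rewrites $\tau_{-s}(J_0(f_d \otimes g_{\widetilde R_0}))$ as $J_0(f_d(\cdot - s) \otimes g_{\widetilde R_0})$, matching the stated right-hand side.

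The main obstacle is that the target operator $J_0(f_d \otimes g_{\widetilde R_0})$ is unbounded, so one cannot work with operator-norm convergence of $I_h^{n,m}$ or of its derivative; the entire argument must be phrased in terms of matrix elements of the form $(\Omega_\omega, \cdot\, \Omega_\omega)$ and $(A_{R_0}\Omega_\omega, \cdot\, \Omega_\omega)$, with {\bf A4} guaranteeing that the relevant vectors lie in $\cD(J_0(f_d \otimes g_{\widetilde R_0}))$. The uniform bound $\sup_{n,m}\|J_0^{n,m}\Omega_\omega\| < \infty$ needed for dominated convergence is free of charge, since any norm-convergent sequence is bounded. Interchanging the time derivative with the spectral Riemann sum is routine because every $J_0^{n,m}$ is bounded and $h \in \cD$ has compact support.
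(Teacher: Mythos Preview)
Your proposal is correct and follows essentially the same route as the paper: both reduce the lemma to the spectral-theorem convergence $J_0^{n,m}(f_d\otimes g_{\widetilde R_0})\Omega_\omega\to J_0(f_d\otimes g_{\widetilde R_0})\Omega_\omega$ (justified by {\bf A4}), push this through the time integral defining $I_h^{n,m}\Omega_\omega$ via unitarity of $e^{-itL_\omega}$, and handle (\ref{34b}) by the explicit formula $\frac{d}{dt}\tau_t(I_h^{n,m})\big|_{t=0}=\int ds\,h'(s)\,\tau_{-s}(J_0^{n,m})$. The only cosmetic difference is that the paper observes the convergence of $e^{-itL_\omega}J_0^{n,m}\Omega_\omega$ is \emph{uniform} in $t$ (since the unitary preserves the norm difference), which makes your dominated-convergence step unnecessary---though of course still valid.
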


\begin{proof}
By {\bf A4}, {\bf A5} and the spectral theorem, 
\begin{eqnarray}
\label{35}
& \kern -2cm
{\rm e}^{i t L_\omega} J_0^{n, m} (f_d \otimes g_{\widetilde R_0}) \Omega_\beta
\quad \quad 
\nonumber \\ [2mm]
& \overrightarrow{\vphantom{X} \scriptstyle n, m \to \infty} \quad  J_0 (f_{d}(\cdot-t)) \otimes g_{\widetilde R_0}) \Omega_\beta
\end{eqnarray}
uniformly in $t \in \RR$. 
Hence, by (\ref{32}),
\begin{eqnarray}
\label{36}
& \kern -1cm
I_h^{n, m} \Omega_\omega = \int {\rm d}t \; h(t) {\rm e}^{i t L_\omega} J_0^{n, m} (f_d \otimes g_{\widetilde R_0}) 
\Omega_\omega
\nonumber \\ [2mm]
&  \qquad \overrightarrow{\vphantom{X} \scriptstyle n, m \to \infty} \quad
\int {\rm d}t \; h(t)  J_0 (f_{d}(\cdot-t)) \otimes g_{\widetilde R_0}) 
\Omega_\omega
\end{eqnarray}
from which (\ref{34a}) follows. Again by (\ref{32}), 
\beq
\label{37}
\left( \frac{ {\rm d}   }{{\rm d} t} \tau_t ( I_h^{n, m}) \right) \Bigl|_{t=0} 
= \int {\rm d}t \; h'(t) \tau_t \left( J_0^{n, m} (f_{d} \otimes g_{\widetilde R_0})\right) . 
\eeq
\noindent
We obtain (\ref{34b}) from (37) by the same argument leading from (\ref{35}) to (\ref{36}). \qed
\end{proof}

We now use assumption {\bf A1}, that $\omega$ is a factor state. 

\begin{lemma}
\label{Lm6}
Let $A, B \in \cA_S$. Then 
\beq
\label{38}
F_{A, B} (\vec x) :=
\Bigl( \omega (A \sigma_{\vec x} (B) ) - \omega (A ) \omega (B)  \Bigr)
\quad \overrightarrow{\vphantom{X} \scriptstyle |\vec x| \to \infty} \quad 0.
\eeq
\end{lemma}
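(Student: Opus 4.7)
My plan is to combine local commutativity~(\ref{4}), space-translation invariance~{\bf A3}, and the factor property~{\bf A1} in the spirit of the standard cluster argument for factor states on quasi-local $C^\ast$-algebras.

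First, a density reduction. Since $|\omega(X)|\le\|X\|$ and $\|\sigma_{\vec x}(B)\|=\|B\|$, an $\varepsilon/3$-argument using the norm density of $\cA_L$ in $\cA_S$ from~(\ref{2}) reduces the assertion to $A\in\cA_S(\cO_1)$ and $B\in\cA_S(\cO_2)$ strictly local. For such $A$ and $B$, covariance of the net under spatial translations combined with~(\ref{4}) provides the sharp form of asymptotic abelianness
\begin{equation*}
[A,\sigma_{\vec x}(B)]=0\qquad\text{whenever }|\vec x|\ge R^\ast(\cO_1,\cO_2).
\end{equation*}

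Second, I would work in the GNS representation. Given any net $\vec x\to\infty$, the operators $\pi_\omega(\sigma_{\vec x}(B))\in\pi_\omega(\cA_S)''$ are norm-bounded by $\|B\|$, so by weak-operator compactness of the unit ball of $\cB(\cH_\omega)$ one extracts a subnet with $\pi_\omega(\sigma_{\vec x}(B))\to X$ weakly, for some $X\in\pi_\omega(\cA_S)''$. Since $[\pi_\omega(A),\pi_\omega(\sigma_{\vec x}(B))]=0$ eventually for every strictly local $A$, passing to the WOT-limit and extending by norm density yields $X\in\pi_\omega(\cA_S)'$. Hence $X$ lies in the center $\pi_\omega(\cA_S)''\cap\pi_\omega(\cA_S)'$, which by the factor assumption~{\bf A1} equals $\cc\,\unit$; writing $X=c\,\unit$, space-translation invariance~{\bf A3} identifies
\begin{equation*}
c=(\Omega_\omega,X\Omega_\omega)=\lim\omega(\sigma_{\vec x}(B))=\omega(B).
\end{equation*}

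Third, along the chosen subnet,
\begin{equation*}
\omega(A\sigma_{\vec x}(B))=(A^\ast\Omega_\omega,\pi_\omega(\sigma_{\vec x}(B))\Omega_\omega)\longrightarrow(A^\ast\Omega_\omega,c\,\Omega_\omega)=\omega(A)\omega(B).
\end{equation*}
Since every subnet of $\{\omega(A\sigma_{\vec x}(B))\}$ admits a further subnet converging to the same value $\omega(A)\omega(B)$ and $\cc$ is Hausdorff, the full net converges, i.e.\ $F_{A,B}(\vec x)\to 0$ as $|\vec x|\to\infty$. The principal obstacle is the identification $X\in\pi_\omega(\cA_S)'$, which rests on the strict asymptotic commutativity supplied by~(\ref{4}) together with norm density, and the subsequent use of~{\bf A1} to collapse the center to scalars; once these two points are secured, {\bf A3} pins down $c=\omega(B)$ and the clustering follows immediately.
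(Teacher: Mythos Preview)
Your argument is correct and is precisely the standard cluster theorem for primary states on asymptotically abelian quasi-local algebras that the paper invokes by citing \cite{17}, Theorem~3.2.2; you have simply spelled out the details the paper leaves to that reference. The only cosmetic discrepancy is that you appeal to {\bf A3} to identify the central scalar $c=\omega(B)$, whereas the paper lists {\bf A1}, {\bf A2} and~(\ref{4}) --- your use of {\bf A3} is in fact the natural ingredient for this step, since spatial translation invariance is exactly what fixes $\omega(\sigma_{\vec x}(B))=\omega(B)$.
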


\begin{proof}
This follows from {\bf A1}, {\bf A2} and (\ref{4}) of (iv), see \cite{17}, Theorem 3.2.2.
\qed
\end{proof}

We are now able to state and prove our main result. Assume, with (\ref{38}), that
\beq
\label{39}
F_{A, B} (\vec x) = O ( | \vec x|^{- \delta} ) 
\eeq
as $| \vec x | \to \infty$. Our thermal Goldstone theorem relates the rate of clustering $\delta$ in (\ref{39})
to {\em ssb}:

\begin{theorem}\label{Th1} {\rm (Thermal Goldstone theorem).}  Let a relativistic quantum field theory be defined as a 
$C^*$-algebraic dynamical system $(\cA_S, \omega, \tau)$, satisfying (i)--(iv) as well as {\bf A1}--{\bf A8}. Then, if $s \ge 3$ and if there 
is {\em ssb} as defined by (\ref{13}),  the rate of clustering $\delta$ in (\ref{39}) must satisfy
\beq
\label{40}
\delta \le s-2. 
\eeq
\end{theorem}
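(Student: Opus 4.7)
My plan is to apply Bogoliubov's inequality \eqref{33} with $C=I_h^{n,m}$ and $A=A_{R_0}$, take the limit $n,m\to\infty$ via Lemma~\ref{Lm5}, and balance the three resulting factors. By Lemma~\ref{Lm4} together with \eqref{34a}, the left-hand side converges to $|c|^2/\beta>0$, so the task reduces to showing that the product of the two right-hand factors grows at most like $R_0^{s-2-\delta}$.

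The order-parameter variance $\omega(A_{R_0}^2)$ is the easy factor. Expanding the double spatial integral from \eqref{26} and using translation invariance together with the clustering hypothesis \eqref{39} yields
\[
\omega(A_{R_0}^2)\le \frac{1}{|L_{R_0}|}\int_{|\vec z|\le 2R_0}|F_{A_0,A_0}(\vec z)|\,d^s\vec z = O(R_0^{-\delta})
\]
whenever $\delta<s$, since the radial integral of $|\vec z|^{-\delta}$ over a ball of radius $R_0$ in $s$ dimensions contributes $O(R_0^{s-\delta})$ which is divided by the volume $|L_{R_0}|=O(R_0^s)$.

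The middle term is where the real work lies. By Lemma~\ref{Lm5}(b) it equals $i\omega([I_h(f_d\otimes g_{\widetilde R_0}),\int dt\,h'(t)J_0(f_d(\cdot-t)\otimes g_{\widetilde R_0})])$. An integration by parts in $t$ rewrites the second argument as $J_0(\partial_0 F\otimes g_{\widetilde R_0})$ with $F:=h*f_d$, and local current conservation \textbf{A6} then converts it into $-\sum_i J_i(F\otimes \partial_i g_{\widetilde R_0})$, whose spatial test function is supported on the thin annular shell near $|\vec x|=\widetilde R_0$ where $\nabla g$ is non-zero. Einstein causality \textbf{A7(b)}, implemented through the partition of unity of Appendix~B, restricts the commutator distribution $K_i(z):=\omega([J_0(z),J_i(0)])$ to the closed lightcone; combined with the $O(d)$ time supports of $\widetilde f_d$ and $F$, this forces the relevant spatial displacement to satisfy $|\vec z|\le O(d)$. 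Translation invariance then reduces the middle term to an integral of $K_i$ against the spatial factor
\[
\int g_{\widetilde R_0}(\vec y+\vec z)\,\partial_i g_{\widetilde R_0}(\vec y)\,d^s\vec y = -\partial_{z_i}G(\vec z),\qquad G(\vec z):=\int g_{\widetilde R_0}(\vec y+\vec z)\,g_{\widetilde R_0}(\vec y)\,d^s\vec y.
\]
The crucial observation is that $G$ is even, so $\partial_i G(0)=0$; Taylor expansion on $|\vec z|\le O(d)$ together with the scaling $\partial_i g_{\widetilde R_0}(\vec y)=\widetilde R_0^{-1}(\partial_i g)(\vec y/\widetilde R_0)$, which gives $\|\partial g_{\widetilde R_0}\|_2^2=O(R_0^{s-2})$, yields $|\partial_i G(\vec z)|=O(R_0^{s-2})$ and hence a middle term of size $O(R_0^{s-2})$. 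Substituting the two bounds into Bogoliubov yields $|c|^2/\beta\le C R_0^{s-2-\delta}$; letting $R_0\to\infty$ with $c\ne 0$ forces $\delta\le s-2$, which is \eqref{40}.

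The hard part is producing the sharp $O(R_0^{s-2})$ estimate. A direct bound using only $\|\partial_i g_{\widetilde R_0}\|_1=O(R_0^{s-1})$ -- the surface area of the shell -- would yield the Mermin--Wagner exponent $s-1$ and miss the theorem. The additional power of $R_0^{-1}$ has to be extracted from the reflection-symmetry cancellation $\partial_i G(0)=0$, which promotes the leading contribution to second order in $|\vec z|$ and so produces an extra factor $\widetilde R_0^{-1}$ upon rescaling. A secondary technical obstacle is that $K_i$ is only a distribution, so its pairing with the smooth but non-compactly-supported function $\partial_{z_i}G$ must be made rigorous via the partition of unity of Appendix~B; this is also why Bogoliubov's inequality is first applied to the bounded spectral cut-off $I_h^{n,m}$, with Lemma~\ref{Lm5} invoked only at the end to lift the estimates to the limiting $I_h$.
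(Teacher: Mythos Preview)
Your architecture—Bogoliubov's inequality \eqref{33} with $C=I_h^{n,m}$ and $A=A_{R_0}$, the limit via Lemma~\ref{Lm5}, current conservation \eqref{42} to reach \eqref{43}, the clustering bound \eqref{56} on $\omega(A_{R_0}^2)$, and the contradiction as $R_0\to\infty$—is exactly the paper's. The two proofs diverge only in how $M=O(\widetilde R_0^{\,s-2})$ is obtained from \eqref{43}. The paper covers the shell $\supp\nabla g_{\widetilde R_0}$ by $|I|=O(\widetilde R_0^{\,s-1})$ unit cubes via the partition of unity of Appendix~B (cf.~\eqref{48}--\eqref{52}), uses locality \textbf{A7(b)} piece by piece to shrink the \emph{other} spatial test function $g_{\widetilde R_0}$ down to some $g_{R_i}$ of bounded support, and bounds each summand in \eqref{53} by $O(1)$ via \textbf{A4}; the prefactor $1/\widetilde R_0$ from \eqref{44} times $|I|$ then yields \eqref{55}. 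You instead pass to the translation-invariant commutator kernel $K_r(z)$, note that \textbf{A7(b)} confines it to $|\vec z|\le O(d)$, and obtain the extra factor $\widetilde R_0^{-1}$ from the parity cancellation $\partial_rG(0)=0$ together with a first-order Taylor remainder controlled by $\|\nabla g_{\widetilde R_0}\|_2^2=O(\widetilde R_0^{\,s-2})$.

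Your route makes the mechanism behind the exponent $s-2$ transparent and is insensitive to the thickness of the transition shell of $g_{\widetilde R_0}$; the paper's route never introduces the kernel $K_r$ at all and works directly with the vectors $J_\mu(f)\Omega_\omega$ supplied by \textbf{A4}. That is the one place your proposal outruns the stated hypotheses: the sentence following \textbf{A7(b)} says explicitly that the authors do \emph{not} assume $\omega([J_0(f),\vec J(g)])$ to be a tempered distribution, yet your bound treats $K_r$ as a distribution of finite order whose pairing with $\partial_rG$ is controlled by a fixed $C^k$-seminorm of the latter. Invoking Appendix~B at the end does not straightforwardly repair this, because once $\partial_rG$ is chopped into local pieces the global evenness that produced $\partial_rG(0)=0$ is no longer available piecewise—and at that point one is effectively back to the paper's term-by-term count rather than your cancellation argument.
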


\begin{proof} 
By Lemma \ref{Lm4}, (\ref{13}) implies (\ref{24}), (\ref{25}), which, by (\ref{33}) and Lemma \ref{Lm5} leads to the inequality, for any 
$R_0 \in \RR$:
\begin{eqnarray}
\label{41}
& \kern -3cm
\frac{1}{\beta}
\left| \omega \left( \left[ I_h (f_{d} \otimes g_{\widetilde R_0}) , A_{R_0} \right] \right) \right|^2 
\nonumber \\ [2mm]
& \kern -1cm \le 
i \omega \left( A_{R_0}^2 \right) \omega \left( \left[ I_h (f_{d} \otimes g_{\widetilde R_0}), \right. \right. \qquad 
\\
& \qquad \qquad \left. \left.
\int {\rm d} t \; h' (t) J_0 (f_{d}(\cdot-t)\otimes g_{\widetilde R_0})
\right] \right). \nonumber 
\end{eqnarray}
By {\bf A6} (local current conservation) 
\begin{eqnarray}
\label{42}
& \kern -3cm
\left( \int {\rm d} t \; h' (t) J_0 (f_{d}(\cdot-t)) \otimes g_{\widetilde R_0}) \right.
\nonumber \\
& \left. + 
\int {\rm d} t \; h (t) \vec J (f_{d}(\cdot-t)) \otimes \nabla g_{\widetilde R_0}) \right) \Omega_\omega = 0. 
\end{eqnarray}
In (\ref{42}) we applied {\bf A6} to the function 
$f = (h \star f_d) \otimes g_{R_0}$, 
where the asterisk denotes convolution. 

Inserting (\ref{42}) into (\ref{41}) we are led to find a bound to the quantity 
\begin{eqnarray}
\label{43}
&M := i \omega 
\left( \left[ \int {\rm d} t_1 \; h (t_1) \vec J (f_{d}(\cdot-t_1)) \otimes g_{\widetilde R_0}) , \right. \right.
\\
& \qquad \qquad \qquad \left. \left.
\int {\rm d} t_2 \; h (t_2) \vec J (f_{d}(\cdot-t_2)) \otimes \nabla g_{\widetilde R_0})   \right] \right).
\nonumber 
\end{eqnarray}
By (\ref{8}) and (\ref{9e}), (\ref{9f}),
\beq
\label{44}
\frac{\partial g_{\widetilde R_0} }{\partial x_i} = \frac{1}{\widetilde R_0}
\frac{\partial g }{\partial x_i} \left(\frac{\vec x}{ \widetilde R_0 }\right)
\qquad 
i = 1, 2, \ldots, s , 
\eeq
where
\beq
\label{45}
\left(\frac{\partial g}{\partial x_i}\right) \left(\frac{\vec x}{ \widetilde R_0 }\right)
= 0 \; 
\text{\ if  $| \vec x | \le \widetilde R_0$  and $ | \vec x | > \widetilde R_0 + \delta$}
\eeq
and thus
\beq
\label{46}
{\rm supp} \left(\frac{\partial g}{\partial x_i}\right) \left(\frac{\vec x}{ \widetilde R_0 }\right) \subseteq
\Gamma_{\widetilde R_0} 
\eeq
where, by (\ref{45}), 
\beq
\Gamma_{\widetilde R_0} = \{ \vec x \in \RR^s \mid \widetilde R_0 \le | \vec x | \le \widetilde R_0  + \delta \}
\eeq
 is a region of volume
\beq
\label{47}
| \Gamma_{\widetilde R_0} | = | S_{\cS} |
\left[ ( R_0 + \delta)^s - R_0^{s} \right] = O ( \widetilde R_0^{s-1} )
\eeq
with $| S_{\cS} |$ the volume of a $s$-dimensional sphere of unit radius. 
Let $\Gamma_{\widetilde R_0}^{int}$ denote the interior of $\Gamma_{\widetilde R_0}$. We consider the cover (see
Theorem \ref{ThB1} of Appendix B):
\beq
\label{48}
\Gamma_{\widetilde R_0}^{int} = \bigcup_{i \in I} G_i 
\eeq
where
\beq
\label{49}
| I | = O ( \widetilde R_0^{s-1} )
\eeq
and $G_i$ are open hypercubes of side $(1+\epsilon)$, $ 0< \epsilon < 1$, in~$\RR^s$, there being only $O(1)$
such hypercubes along a radius, in accordance to (\ref{46}): in (\ref{49}), $|I|$ is the cardinality of the set~$I$. This
is of course, only one possible choice for the cover (\ref{48}). In correspondence to
the latter, we write now the second term in (\ref{43}) following the theorem on the partition of unity in Appendix B:
let
\beq
\label{50}
\beta_i = \sum_{j \in J} \alpha_j , \qquad {\rm supp} \; \alpha_j \in G_i ,
\eeq
corresponding to {\bf B1}; by {\bf B2} and {\bf B3}, 
$ 0 \le \beta_i \le 1 $ for all~$i$. 

We define 
\beq
\label{51}
 i \in I \mapsto \gamma_i^r := \beta_i \frac{ \partial g (\vec x / \widetilde R_0)}{
 \partial x_r }  \, . 
\eeq
Then, by  (\ref{44})--(\ref{51}):
\begin{eqnarray}
\label{52}
& \kern -3cm
\int {\rm d} t_2 \; h (t_2) \vec J (f_{d}(\cdot-t_2) \otimes \nabla g_{\widetilde R_0}) 
\nonumber \\ [2mm]
& = \frac{1}{ \widetilde R_0 } \sum_{r=1}^s 
\sum_{i \in I} \int {\rm d}t_2 h (t_2)  
J_r (f_{d}(\cdot-t_2) \otimes \gamma_i^r)  . \quad
\end{eqnarray}
By locality {\bf A7b} together with (\ref{52}), we have, for $M$ defined by (\ref{43}):
\begin{eqnarray}
\label{53}
&M = \frac{i }{\widetilde R_0} \sum_{r=1}^s \sum_{i \in I} 
\omega \Bigl( \left[ \int {\rm d}t_1 \, h (t_1) J_0 (f_{d}(\cdot-t_1) \otimes g_{R_i}) , 
\nonumber  \right.  \\ [2mm]
& \qquad \qquad \qquad \left.   \int {\rm d}t_2 \, h (t_2) J_r (f_{d}(\cdot-t_2) \otimes \gamma_i^r) \right] \Bigr) ,
\end{eqnarray}
where
\beq
\label{54}
R_i = O (1) \quad \forall i \in I , 
\eeq
is the minimal length such that ${\rm supp} f_{d}(\cdot-t_1)  \otimes g_{R_i}$ is time-like to 
${\rm supp} f_{d}(\cdot-t_2)  \otimes \gamma_{i}^r $: 
it depends only on $d$, the support of $h$ and 
the diameter of the support of $\gamma_i^r$, which is of order one by our choice of $G_i$ in (\ref{48}). 
By (\ref{49}), (\ref{53}), (\ref{54}) and Assumption {\bf A4}
\beq
\label{55}
0 \le M \le const. \; \widetilde R_0^{s-2} , 
\eeq
where the constant is independent of $\widetilde R_0$. 
By \eqref{14}, \eqref{26}, \eqref{27} and the KMS condition (Assumption {\bf A2})
\beq
\label{56}
\begin{array}{rl}
\omega ( A_{R_0}^2) & = \frac{1}{| L_{R_0} |^2} \int_{L_{R_0}} {\rm d}^3 \vec x
\int_{L_{R_0}} {\rm d}^3 \vec y \; 
\\ [4mm]
& \qquad \qquad \times 
\Bigl( \omega \bigl( A_0 \sigma_{\vec x - \vec y} (A_0) \bigr) - \omega(A_0)^2 \Bigr) 
\\ [4mm]
& \le 
\frac{c}{R_0^s} 
(2 R_0)^{s- \delta} = c R_0^{- \delta} .  
\end{array}
\eeq
Inserting (\ref{41}), (\ref{43}), (\ref{55}) and (\ref{56}) into (\ref{24}) of Lemma \ref{Lm4}, we obtain, with (\ref{25}):
\beq
\label{57}
0 \ne c  \le d \cdot R_0^{s-2- \delta} , 
\eeq
where $d$ is a positive constant independent of $R_0$. (\ref{57})~is true for any $R_0 \in \RR$; taking $R_0 \to \infty$, 
we obtain a contradiction unless \eqref{40} holds. \qed
\end{proof}

We now remark on the restriction to $s\ge3$ in Theorem \ref{Th1}. There are no finite-temperature equilibrium two-point functions (with vanishing chemical potential) for the massless free field for $s=1$ and $s=2$ and nothing is known for interacting theories (see, e.g.,[34], pp. 144 and 151 for a pedagogic discussion). The proof of Theorem \ref{Th1} does not work for $s=1$ (the surface degenerates to a point) and does work for $s=2$ but the result is inconclusive, although \eqref{40}
 correctly predicts a borderline behaviour of the case $s=2$.

For the scalar free field of mass $m$, the two point function corresponding to $F_{A,B}$ in (\ref{38}) is, 
for $s=3$: 
\begin{eqnarray}
\label{58a}
W_\beta (x, m) 
& =& (2 \pi)^{-3} \int {\rm d}^4 p \; \epsilon (p_0) \delta (p^2 - m^2) 
\nonumber \\ %[2mm]
&& \qquad \qquad \qquad 
\times (1 - {\rm e}^{- \beta p_0})^{-1} {\rm e}^{-i px}  
\\ [2mm]
&=& (2 \pi)^{-3} \int \frac{ {\rm d}^3 \vec p }{ 2 \omega_{\vec p} } \; {\rm e}^{i \vec p \vec x} 
\left( \frac{ {\rm e}^{- i \omega_{\vec p} x_0 } }{ 1 - {\rm e}^{- \beta \omega_{\vec p}  }} + 
\frac{ {\rm e}^{ i \omega_{\vec p} x_0 } }{  {\rm e}^{\beta \omega_{\vec p}  }-1}
\right)   \nonumber
\end{eqnarray}
where 
\beq
\label{58b}
\omega_{\vec p} := (\vec p \, ^2 + m^2)^{1/2} .
\eeq
For $m=0$ the asymptotic behaviour of $W_\beta(x, m)$ for $|x_0| \ll | \vec x | $ is seen from (\ref{58a})
to be the same as that of 
\beq
\label{59}
\int \frac{ {\rm d}^3 \vec p }{ | \vec p |^2 } \; {\rm e}^{i \vec p \vec x} \cong \frac{ 1 }{ | \vec x | }  , 
\eeq
which contrasts with the $\frac{ 1 }{ | \vec x |^2 }$ fall-off in the massless $T=0$ case. (\ref{59}) is also the asymptotic
rate of fall-off of $F_{A, B}(\vec x)$ in (\ref{38}) in the free massless case, and thus the result of Theorem~\ref{Th1} may also be expected to be optimal in thermal (relativistic) quantum field theory, as it is in non-relativistic quantum statistical 
mechanics (see \cite{5, 7}).

We conclude this section with some results and conjectures related to Theorem \ref{Th1}, which help to clarify its significance. The conjectured optimality of Theorem \ref{Th1} suggests the more precise: 

\begin{conjecture}
\label{Con1}
\quad
\begin{itemize}
\item [(i)] in the massless case $(m=0)$ 
\beq
\label{60a}
O \left( | \vec x|^{- \delta} \right) \quad \hbox{with} \quad \delta \le 1 ; 
\eeq
\item [(ii)] in the massive case $(m>0)$ 
\beq
\label{60b}
O \left( | \vec x|^{- \delta} \right) \quad \hbox{with} \quad \delta > 1 . 
\eeq
\end{itemize}
\end{conjecture}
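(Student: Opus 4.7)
The plan is to derive a K\"all\'en--Lehmann--type spectral representation for the truncated two-point function from the relativistic KMS condition of Bros and Buchholz \cite{9}, and then read off the stated asymptotic rates from the support of the associated spectral measure. Concretely, for a self-adjoint $A \in \cA_L$, the joint spectral theorem for the commuting generators $(L_\omega, \vec P_\omega)$ applied to $A \Omega_\omega$, together with the KMS property {\bf A2} and space-translation invariance {\bf A3}, yields
\begin{equation}
\omega\bigl( A\,\tau_t \sigma_{\vec x}(A)\bigr) - \omega(A)^2 = \int d\mu_A(p_0, \vec p)\; e^{-ip_0 t + i \vec p \cdot \vec x},
\end{equation}
where $\mu_A$ is a tempered positive measure on $\RR^{s+1}$ obeying a KMS balance in $p_0$, and which by the Bros--Buchholz tube analyticity admits a meaningful ``mass spectrum'' via its Lorentz-invariant decomposition. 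In the free case of \eqref{58a}, $\mu_A$ is concentrated on $p^2 = m^2$. Under this representation, the theory is declared \emph{massless} if $0 \in \mathrm{supp}\,\mu_A$ for some suitable $A$, and \emph{massive} if $\mathrm{supp}\,\mu_A \cap \{p^2 < m^2\} = \emptyset$ for some $m>0$.

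For part (i), the strategy is to isolate the infrared slice of $\mu_A$ near $p^2 = 0$ and pass to the Matsubara representation via the KMS balance, as in the second equality of \eqref{58a}. The zero Matsubara mode then contributes a term of the form $\int d\nu(\vec p)\, |\vec p|^{-2}\, e^{i \vec p \cdot \vec x}$ with $d\nu$ a positive measure nontrivial on every neighborhood of the origin, which for $s=3$ decays exactly as $|\vec x|^{-1}$ by the classical computation in \eqref{59}. Nonzero Matsubara contributions and any massive part of $\mu_A$ decay faster than $|\vec x|^{-1}$ and therefore cannot cancel this leading tail; hence $\delta \le 1$. For part (ii), the mass gap together with the Bros--Buchholz analyticity permits a Paley--Wiener--type shift of each Matsubara contour into the region of analyticity in $\vec p$, producing a decay faster than every power of $1/|\vec x|$ (in fact exponential) in each Matsubara term; a uniform bound across Matsubara frequencies then delivers $\delta > 1$.

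The main obstacle will be step one: extracting a bona fide positive spectral measure with physically meaningful ``mass'' support from the Bros--Buchholz relativistic KMS analyticity is subtle in interacting thermal theories, because the Liouvillian $L_\omega$ is generally unbounded below (cf.\ the remark following \eqref{6c}), so the usual vacuum Lehmann construction does not apply directly. One would need to work with the doubled GNS construction and isolate the ``particle-like'' component of the spectral measure, in the spirit of the Bros--Buchholz analysis of thermal particles. A secondary difficulty specific to part (ii) is uniform control of $\mu_A$ across the spectral gap so that the Matsubara resummation does not re-introduce a slow tail; it is precisely the absence of such quantitative input in the general interacting case that keeps the dichotomy at the level of a conjecture rather than a theorem.
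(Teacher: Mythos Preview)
The statement you are addressing is labelled a \emph{Conjecture} in the paper, and the paper does not provide a proof for it. What the paper offers is motivation: the explicit free-field computation \eqref{58a}--\eqref{59} shows that the massless thermal two-point function falls off like $|\vec x|^{-1}$ for $s=3$, which saturates the bound $\delta \le s-2$ of Theorem~\ref{Th1}; this, together with the analogy to the optimal quantum-statistical-mechanical result of \cite{7}, leads the authors to \emph{conjecture} the dichotomy \eqref{60a}--\eqref{60b}. Remark~\ref{Rm2} further underlines that \eqref{60b} is deliberately weaker than exponential decay, precisely because the interacting case is not under control. So there is no ``paper's own proof'' to compare against.

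Your proposal is therefore not a proof of a theorem but a sketch of a possible strategy toward resolving an open problem. You are commendably honest about this in your last paragraph, where you identify the two genuine obstructions: (a) the absence of a vacuum-type K\"all\'en--Lehmann representation when $L_\omega$ is unbounded below, so that the very notion of ``mass support'' of $\mu_A$ requires the Bros--Buchholz damping-factor machinery rather than a straightforward spectral decomposition; and (b) the lack of uniform bounds across Matsubara modes in the massive case. These are exactly the reasons the authors leave the statement as a conjecture. In particular, your step ``the theory is declared massless if $0 \in \mathrm{supp}\,\mu_A$'' presupposes a well-defined Lorentz-invariant mass decomposition of the thermal spectral measure, which in the interacting thermal setting is not available without further input (cf.\ the discussion of damping form factors in \cite{23} referenced in Remark~\ref{Rm2}). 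Until those gaps are closed, the argument remains a heuristic outline rather than a proof.
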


\begin{corollary}
\label{Cor1}
Under Conjecture \ref{Con1}, {\em ssb} of a continuous internal symmetry in thermal relativistic quantum 
field theory with a conserved local current implies the existence of zero mass particles in the theory.
\end{corollary}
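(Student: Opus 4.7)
The proof is a direct logical combination of Theorem~\ref{Th1} with Conjecture~\ref{Con1}, and the plan is to run the following two-step syllogism. First, I would apply Theorem~\ref{Th1} in the physically relevant dimension $s=3$: since ssb is assumed in the sense of \eqref{13}, the theorem forces the clustering exponent in \eqref{39} to satisfy $\delta \le s-2 = 1$. Second, I would invoke the contrapositive of part~(ii) of Conjecture~\ref{Con1}: a polynomial decay with $\delta \le 1$ is incompatible with the theory being strictly massive ($m>0$). Consequently, only case~(i) of the conjecture, $m=0$, is consistent with the hypotheses, and a zero-mass mode must be present.

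The only conceptually delicate point --- not in the logic of the proof, but in the formulation of its conclusion --- is that a "zero-mass particle" is not a primitive notion in the $C^*$-algebraic KMS framework of Section~\ref{II}. To make the statement of the Corollary precise, I would adopt the interpretation indicated at the end of Section~\ref{III} and in the Bros--Buchholz discussion referenced in the Introduction: the mass content of the theory is read off from the joint spectrum of $(L_\omega, \vec P_\omega)$ via a K\"all\'en--Lehmann-type representation of the truncated two-point function $F_{A,B}$ of \eqref{38}. Under this reading, a fall-off $F_{A,B}(\vec x) = O(|\vec x|^{-\delta})$ with $\delta \le 1$ forces the relevant spectral measure to have support down to $p^2=0$, which is exactly the sense in which zero-mass excitations are claimed to exist at finite temperature.

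The genuine difficulty sits entirely in Conjecture~\ref{Con1} rather than in the Corollary: once (i) and (ii) are granted, there is nothing left to prove beyond noting the trivial implication $\delta \le 1 \text{ and } (m>0 \Rightarrow \delta>1) \Rightarrow m=0$. I therefore expect the main obstacle, in any attempt to upgrade the Corollary from a conditional statement to an unconditional one, to be establishing the sharp dichotomy of Conjecture~\ref{Con1}, in particular the massive side~(ii): one needs to show that any finite-temperature two-point function arising from a theory with a positive mass gap in the sense made precise above decays at spacelike separation strictly faster than $|\vec x|^{-1}$. The free-field computation around \eqref{58a}--\eqref{59} shows that both sides of the dichotomy are saturated by the free scalar field, which is the evidence that makes Theorem~\ref{Th1} plausibly optimal and the Corollary the "right" thermal Goldstone statement.
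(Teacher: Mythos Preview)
Your proposal is correct and matches the paper's approach: the paper gives no formal proof of Corollary~\ref{Cor1}, treating it as an immediate consequence of combining the bound $\delta \le s-2$ from Theorem~\ref{Th1} (at $s=3$) with the dichotomy of Conjecture~\ref{Con1}, exactly as you do. Your additional remarks on the interpretation of ``zero-mass particle'' in the KMS setting and on where the real difficulty lies (namely in establishing Conjecture~\ref{Con1}(ii)) go beyond what the paper spells out but are consonant with its surrounding discussion.
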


Thus,  under Conjecture \ref{Con1} the statement of Goldstone's theorem for $T > 0$ is the same as the 
corresponding one for $T=0$ (see \cite{10a}\cite{10b}).

\begin{remark}
\label{Rm1}
A certain form of slow decay in space-like directions has also been proved in \cite{9} to be necessary 
for the existence of {\em ssb} at $T > 0$ (see (18) of \cite{9}). 
\end{remark}

\begin{remark}
\label{Rm2}
Note that (\ref{60b}) does not assume exponential decay in the $m >0$ case --- as happens in the free field case
(\ref{58a}). This is in agreement with the conjectured behaviour of the damping form factors in \cite{23} (see also
the discussion in \cite{1}). 
\end{remark}

Is there a spectral theoretic statement related to $(\ref{60a})$ and $(\ref{60b})$? Since
the spectrum $\sigma (L_\omega)$ of $L_\omega$ is the whole real line,
\beq
\label{62}
\sigma (L_\omega) = \RR , 
\eeq
this question has no obvious answer. However, if $\Omega_\omega$ is the unique (up to a phase) 
normalised eigenvector of $L_\omega$ with eigenvalue $0$, then $\omega$ is a factor state \cite{24} and one has the following result
(see \cite{1, 29}): 

\begin{theorem} Let $\Omega_\omega$ be as above, and $P^+$ denote the projection onto the strictly positive part of $\sigma(L_\omega)$. Assume there exist positive constants $\delta > 0$ and $C_1 (\cO)$
such that 
\beq
\label{63}
 \| {\rm e}^{- \lambda L_\omega} P^+ \pi_\omega (A) \Omega_\omega \|
 \le C_1 (\cO) \lambda^{-   \delta} \| A \|    
 \eeq
for all $A \in \cA ( \cO)$. Consider now two space-like separated space-time regions $\cO_1$, $\cO_2$, which can be embedded into~$\cO$
by translation and such that $\cO_1 + r e \subset \cO_2'$, $ r \gg \beta$; 
then, for all $A \in \cA ( \cO_1)$, and all $B \in \cA (\cO_2)$ 
\beq
\label{64}
| \omega (BA) - \omega (B) \omega (A) | \le C_2 \, r^{- 2 \delta} \| A \| \| B \|  \, . 
\eeq
The constant $C_2 (\beta, \cO) \in \RR^+$ may depend on the temperature $T = \beta^{-1}$ and the size of the region
$\cO$, but is independent of $r$, $A$, and $B$. 
\end{theorem}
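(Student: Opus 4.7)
The plan is to combine the spectral hypothesis on $L_\omega$ with the KMS condition and local commutativity, trading the modular (imaginary-time) parameter $\lambda$ for the space-like distance $r$.

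First, I would reduce the problem to a pure $P^+$ estimate. The hypothesis that $\Omega_\omega$ is, up to phase, the unique eigenvector of $L_\omega$ at eigenvalue $0$ implies that the spectral projection onto $\{0\}$ coincides with $\mathbf{1} - P^+ - P^-$, so
\begin{equation*}
\omega(BA) - \omega(B)\omega(A) = \langle \pi_\omega(B^*)\Omega_\omega,\,(P^+ + P^-)\,\pi_\omega(A)\Omega_\omega\rangle.
\end{equation*}
The Tomita--Takesaki modular conjugation $J$ attached to $(\cR(\cO),\Omega_\omega)$, available by the KMS assumption~{\bf A2}, satisfies $JL_\omega J = -L_\omega$ and hence swaps $P^+$ and $P^-$. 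This converts the $P^-$ term into a $P^+$ term evaluated on vectors of the form $J\pi_\omega(X^*)\Omega_\omega$ whose norms are controlled by $\|X\|$ via KMS. Thus it suffices to bound $|\langle\phi,P^+\psi\rangle|$ for vectors with $\|\phi\| \lesssim \|B\|$, $\|\psi\| \lesssim \|A\|$, both generated by local operators in translates of $\cO$.

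Second, I would insert the modular damping factor via the identity
\begin{equation*}
\langle\phi,P^+\psi\rangle = \langle e^{-\lambda L_\omega}P^+\phi,\, e^{+\lambda L_\omega}P^+\psi\rangle
\end{equation*}
(valid on the spectral domain). The left slot is immediately controlled by the hypothesis as $\|e^{-\lambda L_\omega}P^+\phi\| \le C_1(\cO)\,\lambda^{-\delta}\|B\|$. The right slot contains the formally unbounded operator $e^{+\lambda L_\omega}$ on $P^+\cH_\omega$; the point is to reroute it via the KMS analyticity of $F(z) = \omega(\pi_\omega(B)\tau_z(\pi_\omega(A)))$ in the strip $0 < \Im z < \beta$. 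Combining this analyticity with local commutativity~{\bf (iv)} and the space-like translation $U(re)$ (which leaves $\Omega_\omega$ invariant by~{\bf A3}), one exchanges, in the spirit of Bros and Buchholz, the imaginary-time shift $\lambda$ for a space-like shift of order $r$. Equivalently, $e^{+\lambda L_\omega}P^+\psi$ gets realized as $e^{-\lambda L_\omega}$ applied to a translated vector, legitimate provided $\lambda \lesssim r$, and then the hypothesis again bounds this factor by $C_1(\cO)\,\lambda^{-\delta}\|A\|$. Multiplying the two bounds and optimizing with $\lambda$ proportional to $r$ (the assumption $r \gg \beta$ ensuring we stay strictly inside the KMS strip and the Bros--Buchholz tube) produces the asserted estimate $|\omega(BA) - \omega(B)\omega(A)| \le C_2\,r^{-2\delta}\|A\|\|B\|$.

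The main obstacle, to my mind, is this third step: the rigorous exchange between an imaginary-time shift $\lambda$ and a space-like displacement $r$. At $T=0$ this is routine via Lorentz covariance, but positive temperature fixes a preferred frame and breaks Lorentz invariance, so the tilt of imaginary time into imaginary space is not automatic. The substitute is the relativistic KMS condition of Bros and Buchholz, whose derivation from ordinary KMS and local commutativity requires a careful edge-of-the-wedge argument. Tracking the constants so that $\lambda \sim r$ is admissible, and verifying that the spectral hypothesis survives the rerouting intact, is where the real analytical work resides.
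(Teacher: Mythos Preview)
The paper does not actually prove this theorem: it is stated and then attributed to the references~\cite{1} and~\cite{28} (J\"akel's earlier work on decay of spatial correlations in thermal states). So there is no ``paper's own proof'' to compare against; the comparison has to be made with the argument in the cited literature.

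Your outline is broadly the right one and matches the strategy of the cited proof: split off the rank-one projection onto $\Omega_\omega$, treat $P^+$ and $P^-$ symmetrically via modular conjugation (your observation $JL_\omega J=-L_\omega$ is correct, and $JA\Omega_\omega=\Delta^{1/2}A^*\Omega_\omega$ reduces the $P^-$ piece to a $P^+$ estimate with a harmless extra factor $e^{-\beta L_\omega/2}$), insert $e^{\mp\lambda L_\omega}$, and then trade $\lambda$ for $r$.

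Where your description is slightly off is in the mechanism for that last trade. You invoke the relativistic KMS condition of Bros--Buchholz and speak of ``tilting imaginary time into imaginary space''. That machinery is not needed here. The exchange comes from something more elementary: finite propagation speed. Because $\cO_1+re\subset\cO_2'$, the time-translate $\tau_t(A)$ remains space-like to $B$ for all real $t$ with $|t|$ less than a constant times $r$, so $[\tau_t(A),B]=0$ on that interval and hence
\[
\langle B^*\Omega_\omega,\,e^{itL_\omega}A\Omega_\omega\rangle
=\langle A^*\Omega_\omega,\,e^{-itL_\omega}B\Omega_\omega\rangle
\qquad (|t|\lesssim r).
\]
Splitting both sides into $P^\pm$ pieces gives functions that are separately analytic and bounded in the upper, respectively lower, half-plane; the equality on the large real interval $(-cr,cr)$ is what lets you push the contour to $\Im t=\lambda$ with $\lambda$ proportional to $r$ and then invoke the hypothesis twice to produce $\lambda^{-2\delta}\sim r^{-2\delta}$. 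Only the ordinary (time-direction) KMS analyticity enters, not the full space--time tube. This is the thermal analogue of Fredenhagen's cluster argument, with your spectral hypothesis playing the role that the mass gap plays at $T=0$. Your identification of this step as the crux is correct; the obstacle you anticipate is milder than you suggest.
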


As remarked in \cite{1}, from explicit calculations one expects that $\delta = 1/ 2$ for 
free massless bosons in $3+1$ space-time dimensions, 
and thus the exponent on the r.h.s.~of (\ref{64}) is optimal due to (\ref{59}). 

It is interesting that, in the massive case, for $T =0$, exponential decay on the r.h.s.~of~(\ref{64}) follows from 
the spectral gap in $H_\omega> 0$, i.e., exponential decay in $\lambda$ of 
\beq
\| {\rm e}^{- \lambda H_\omega} \pi_\omega (A) \Omega_\omega \| , 
\eeq
by the cluster theorem \cite{25}, while, for $T > 0$, 
sufficiently fast polynomial decay of correlations --- (\ref{64}), with $\delta > 1/2 $ --- equally follows from 
sufficiently fast decay of 
\beq
\| {\rm e}^{- \lambda L_\omega} P^+ \pi_\omega (A) \Omega_\omega \| , 
\eeq
--- (\ref{63}) with $\delta > 1/2$ --- if (\ref{60b}) is correct. It is to be remarked that (79) is related (see \cite{28})
to the Buchholz-Wichmann nuclearity property \cite{27}.

\section{Discussion and Outlook}
\label{IV}

In this paper we have shown that a Goldstone theorem may be proved in thermal quantum field theory, relating
{\em ssb} to the space-like decay of the two-point function (Theorem III.3 of Section~\ref{III}). 
Since the limiting behaviour (\ref{40}) of Theorem III.1 agrees with that of the massless free field theory (\ref{59}), 
we were led to the conjecture that the theorem may be optimal, as occurs in non-relativistic quantum statistical mechanics, leading to a sharp distinction~(\ref{60a}),~(\ref{60b}) between massive and massless thermal rqft. The latter is found by examining the rate of fall-off of the two-point function only in space-like directions. If this conjecture is correct, 
Corollary III.5 provides a statement of Goldstone's theorem for $T > 0$, which is quite analogous to the one for $T=0$
(see \cite{10a, 10b}).

We have chosen to set our scale large only as far as space-like distances are concerned. As remarked in 
\cite{26}, this may be appropriate for discussing global issues like superselection sectors, statistics and symmetries. But there remains scattering theory with the associated notions of particles and infraparticles, and there large time-like distances are crucial. 
Thus, if one is really concerned with unravelling the concept of particle in thermal rqft, the approach of Bros and Buchholz ([9,23]) is the most natural one.  However, time-like decay as~$|t|^{-3/2}$ for $\vec x = \vec v t$ (which follows from (\ref{58a})) leads, together with the assumption of a sharp dispersion law, to the famous Narnhofer-Requardt-Thirring theorem \cite{29}, according to which there is no interaction. 
We refer to \cite{23} (see Sect.~d., p.~518) for a lucid discussion of possible ways out of this dilemma, but the matter still remains under discussion.

A relevant open problem is a purely algebraic version of the Goldstone theorem in the case of positive temperature, in analogy to what was accomplished in ref. [36] for the ground state. It should also be remarked that domain problems such as the one pertaining to assumption A4 have been solved in [36], without the need of any assumption, in a very ingenious way (see (3.6) et seq), but we were unable to do the same here. In addition, nonconserved currents, successfully dealt with in [36], remains an open problem for $T>0$. Finally, {\em ssb} of Lorentz and Galilei symmetries has been studied by a different method in ref. [37], where references to related work by Requardt are to be found.
 
A different but fundamental set of issues related to time-like clustering, not
mentioned in \cite{26}, concerns stability.  The time-like cluster property (also called  mixing property \cite{24}
\beq
\label{65}
\lim_{t \to \infty} 
\Bigl( \omega (A \tau_t (B)) - \omega (A)\omega (B) \Bigr) =0
\eeq
implies, for $T > 0$, the dynamic stability condition of Haag, Kastler and Trych-Pohlmeyer \cite{30}
\beq
\label{66}
\lim_{T \to \infty} 
\int_{-T}^T {\rm d} t  \;  \omega ([ A,  \tau_t (B)])  =0
\eeq
(see \cite{16}, Vol.~2, Theorem 5.4.12, pg.~165). Although (\ref{65}) has been proved for the ground state of relativistic quantum field theories \cite{31}, it is still open for thermal KMS states, although a similar property has been proved for a weakly dense set of (in general non-KMS) states [33]. Proof of (\ref{65}) for KMS states would imply the property of return to equilibrium \cite{24}, as well as the dynamic stability condition~(\ref{66}), both quite deep, and in general, hard to prove (see \cite{24} for references). 

\bigskip

{\em Acknowledgements.}
This paper originates from discussion during our stay at the Erwin Schr\"odinger Institute (ESI) 
Vienna, from 1--14 June 2009. We are grateful to Prof.~J.~Yngvason for making our visit possible, as well as to 
Heide Narnhofer and Geoffrey Sewell for several enjoyable discussions on related matters. We also thank Heide Narnhofer for remarks on the manuscript, as well as Manfred Requardt for letting us know of ref [37]. 
  
\section{Appendix A}
\label{V}

\begin{theorem}
\label{ThA1} Let $\cA$ be a $C^*$-algebra and $\omega$ a state on $\cA$ satisfying the KMS condition~\eqref{5} 
w.r.t.~a group of norm-continuous automorphisms $\{ \tau_t \}_{t \in \RR}$. 
Let $ A \in \cA$ and $C \in \cA$ be both of the form 
\beq
\label{A1}
C = \int {\rm d} t \; h (t) \tau_t (B)
\eeq
with some $B \in \cA$ and 
\beq
\label{A2}
\hat h \in \cD = C_0^\infty (\RR),
\eeq
where $\hat h$ denotes Fourier transform of $h$. Then 
\beq
\label{A3}
\frac{2}{\beta} \, | \omega ( [ C, A^*])|^2  
\le \omega \left( \left[ C , i \left( \frac{{\rm d} }{{\rm d} t} \tau_t (C) \right) \Bigl|_{t=0} \right] \right) \cdot 
\omega ( \{ A, A^* \} ) ,
\eeq
where $\{ A, B \} := AB + BA$.
\end{theorem}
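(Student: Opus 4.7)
The plan is to derive \eqref{A3} as a Cauchy–Schwarz inequality for the Duhamel–Bogoliubov sesquilinear form
\[
(X,Y):=\int_0^\beta d\lambda\,\omega\bigl(X^*\tau_{i\lambda}(Y)\bigr),
\]
which is Hermitian and positive semi-definite thanks to the KMS condition. In the GNS triple with $L_\omega\Omega_\omega=0$, using $\tau_{i\lambda}(Y)\Omega_\omega=e^{-\lambda L_\omega}Y\Omega_\omega$, one has
\[
(X,X)=\int\tfrac{1-e^{-\beta E}}{E}\,d\mu_X(E)\ge 0,
\]
where $d\mu_X$ is the spectral measure of $L_\omega$ at $X\Omega_\omega$ (the integrand is $\ge 0$ for every real $E$). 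The Paley–Wiener hypothesis $\hat h\in C_0^\infty(\RR)$ is precisely what makes all manipulations with $\tau_{i\lambda}$ rigorous: it confines $C=\int h(t)\tau_t(B)\,dt$ and, by integration by parts, $\delta(C):=\frac{d}{dt}\tau_t(C)|_{t=0}=-\int h'(t)\tau_t(B)\,dt$ to the entire analytic subalgebra with bounded Arveson spectrum, so $\tau_{i\lambda}(C)$ is a norm-continuous $\cA$-valued function of $\lambda$ and the KMS boundary identity $\omega(X^*\tau_{i\beta}(Y))=\omega(YX^*)$ is available for $Y\in\{C,\delta(C)\}$.

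I would next apply Cauchy–Schwarz with the choice $X=A$, $Y=\delta(C)$, i.e.\ $|(A,\delta(C))|^2\le(A,A)(\delta(C),\delta(C))$, and identify both sides with the quantities in \eqref{A3}. From the analyticity of $C$ one has $\tau_{i\lambda}(\delta(C))=-i\,\partial_\lambda\tau_{i\lambda}(C)$, so the $\lambda$-integral telescopes and KMS at the upper endpoint gives
\[
(A,\delta(C))=-i\bigl[\omega(A^*\tau_{i\beta}(C))-\omega(A^*C)\bigr]=-i\,\omega([C,A^*]),
\]
hence $|(A,\delta(C))|^2=|\omega([C,A^*])|^2$. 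Repeating the same step with $A$ replaced by $\delta(C)$ yields $(\delta(C),\delta(C))=-i\,\omega([C,\delta(C)^*])$, which, using the *-derivation property $\delta(C)^*=\delta(C^*)$ and the manifest non-negativity of the left-hand side to fix the sign, rearranges into
\[
(\delta(C),\delta(C))=\omega\!\left(\left[C,\,i\bigl(\tfrac{d}{dt}\tau_t(C)\bigr)\big|_{t=0}\right]\right).
\]

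The only remaining ingredient is the bound $(A,A)\le\tfrac{\beta}{2}\omega(\{A,A^*\})$. In the spectral representation above, $\omega(A^*A)=\int d\mu_A(E)$ and $\omega(AA^*)=\int e^{-\beta E}d\mu_A(E)$ (the latter a direct consequence of KMS), so $\omega(\{A,A^*\})=\int(1+e^{-\beta E})d\mu_A(E)$. The sharp elementary inequality $\tanh(x)\le x$ for $x\ge 0$, equivalently
\[
\tfrac{1-e^{-\beta E}}{E}\;\le\;\tfrac{\beta}{2}(1+e^{-\beta E}),\qquad E\in\RR,
\]
integrated against $d\mu_A$, yields the claim. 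Combined with Cauchy–Schwarz and the previous identification, this is \eqref{A3}.

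The main technical obstacle is not the algebra but the rigorous handling of the complex-time shifts $\tau_{i\lambda}$, the exchange of $\partial_\lambda$ with $\int d\lambda$, and the extraction of KMS boundary values — operations not defined for a generic bounded element of $\cA$. This is exactly what the assumption $\hat h\in C_0^\infty(\RR)$ buys: it places $C$ and $\delta(C)$ in the entire analytic subalgebra with bounded $\tau$-spectrum, on which all these manipulations are justified and the KMS functional identities extend analytically. A secondary bookkeeping point is fixing the correct sign in the commutator representation of $(\delta(C),\delta(C))$, which is pinned down by the non-negativity of the Duhamel diagonal.
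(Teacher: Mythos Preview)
Your argument is correct and is precisely the Garrison--Wong/Bratteli--Robinson derivation that the paper invokes by citation (the paper's own proof is ``See \cite{13} and \cite{16}'' plus a remark that norm-continuity of $\tau$ is used to pass from analytic $A$ of the form \eqref{A1} to general $A\in\cA$ by density); your spectral-measure treatment of $(A,A)$ handles general $A$ directly, which is slightly cleaner. One bookkeeping caveat: your identification $(\delta(C),\delta(C))=\omega([C,i\delta(C)])$ tacitly uses $C=C^{*}$, since the telescoping actually yields $-i\,\omega([C,\delta(C^{*})])$; this is harmless in the paper's application (where $C$ is self-adjoint) and the sign is indeed pinned down by positivity, but for the theorem as stated with general $C$ a word on this point is in order.
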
 

Inequality (\ref{A3}) ({\em Bogoliubov's inequality}) may be extended to all $A \in \cA$ and to those $C$ of the form 
\beq
\label{A4}
C = \int {\rm d} t \; g (t) \tau_t (B),
\eeq
where $g \in C^\infty (\RR)$ is such that, given any $\epsilon > 0$, there exists $h$ satisfying (\ref{A2})
such that
\beq
\label{A5}
 \int {\rm d} t \; | h' (t) -g'(t) | < \epsilon 
\eeq
and 
\beq
\label{A5b}
 \int {\rm d} t \; | h (t) -g(t) | < \epsilon. 
\eeq

\goodbreak
\begin{proof}
See \cite{13} and \cite{16}, Vol.~II, pg.~333. Norm-continuity of the time-translation automorphisms was not explicitly stated in \cite{13}, 
but is used to extend the result from $A$ in the class (\ref{A1}), (\ref{A2}) to the whole of~$\cA$ by density (see
\cite{19}, Theorem 4.8). The extension to (\ref{A4}) was not mentioned in \cite{13}, but follows from (\ref{A1}), (\ref{A2}), (\ref{A3}), (\ref{A4}) 
and (\ref{A5}), together with 
\beq
\label{A6a}
\frac{{\rm d} }{{\rm d} t} \tau_t (C) \Bigl|_{t=0} = \int {\rm d}t \; g' (t) \tau_t (B) \eeq
for $C$ of the form (\ref{A4}), and 
\beq
%\label{A3}
\frac{{\rm d} }{{\rm d} t} \tau_t (C) \Bigl|_{t=0} = \int {\rm d}t \; h' (t) \tau_t (B) 
\eeq
for $C$ of the form (\ref{A1}).
\qed
\end{proof}

\begin{corollary}
\label{CorA1}
Let $A \in \cA_S$ (see Section~\ref{II}) and $C$ be of the form (\ref{A4}), with $B \in \cA_S$ and 
\beq
\label{A7}
g \in C^\infty_0 (\RR).
\eeq
Then, if $\omega$ is a state on $\cA_S$ satisfying the KMS condition, Condition {\bf A3} holds.
\end{corollary}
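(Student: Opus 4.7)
The plan is to reduce Corollary~\ref{CorA1} to the extension of Theorem~\ref{ThA1} that is stated immediately after that theorem: the extension asserts \eqref{A3} for $C$ of the form \eqref{A4} with $g \in C^\infty(\RR)$, provided that for every $\epsilon > 0$ one can find $h$ satisfying \eqref{A2} for which both \eqref{A5} and \eqref{A5b} hold. Given the stronger hypothesis $g \in C_0^\infty(\RR)$, the entire task therefore reduces to producing, for every $\epsilon > 0$, a function $h$ with $\hat h \in C_0^\infty(\RR)$ that approximates $g$ jointly in $\int |h-g|\,dt$ and $\int |h'-g'|\,dt$.

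To construct such an $h$, I would fix a cutoff $\chi \in C_0^\infty(\RR)$ with $0 \le \chi \le 1$, $\chi \equiv 1$ on $[-1,1]$ and $\chi \equiv 0$ outside $[-2,2]$, and for $n \in \nn$ set $\chi_n(p) := \chi(p/n)$. Since $g \in C_0^\infty(\RR)$ is in particular a Schwartz function, $\hat g$ is Schwartz, so $\hat h_n := \chi_n \hat g$ lies in $C_0^\infty(\RR)$; defining $h_n$ as its inverse Fourier transform meets \eqref{A2}. Now $g - h_n$ and $g' - h_n'$ are the inverse Fourier transforms of $(1-\chi_n)\hat g$ and $(1-\chi_n)(ip)\hat g$ respectively, both of which tend to zero in the Schwartz topology as $n \to \infty$ (being products of a Schwartz function with a uniformly bounded sequence supported outside the ball of radius $n$). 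The continuous embedding of the Schwartz space into $L^1(\RR)$, encoded by $\|f\|_{L^1} \le C \sup_{t \in \RR} (1+t^2)|f(t)|$, then implies $\int |g-h_n|\,dt + \int |g'-h_n'|\,dt \to 0$. Taking $n$ sufficiently large, $h := h_n$ fulfils \eqref{A5} and \eqref{A5b}.

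With this approximation in place, the extension of Theorem~\ref{ThA1} delivers \eqref{A3} for the $C$ of the corollary and every $A \in \cA_S$. The main obstacle is really the simultaneous $L^1$-control of $g - h_n$ and $g' - h_n'$: the derivative version is needed because the right-hand side of \eqref{A3} involves $(d/dt)\tau_t(C)|_{t=0} = \int g'(t)\tau_t(B)\,dt$ (cf.~\eqref{A6a}), whose convergence in $\cA_S$-norm requires $h_n' \to g'$ in $L^1(\RR)$. Once this is secured, passing to the limit $n \to \infty$ on both sides of \eqref{A3} for $h = h_n$ is routine, using norm-continuity of $\tau_t$ (assumption (i)) and continuity of the state $\omega$.
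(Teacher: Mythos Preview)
Your proof is correct and follows the same strategy as the paper: verify the approximation hypotheses \eqref{A5}, \eqref{A5b} of the extension of Theorem~\ref{ThA1} for $g \in C_0^\infty(\RR)$. The paper's own argument is a one-line remark that, since any $h$ with $\hat h \in C_0^\infty$ is Schwartz, one may ``choose the tail to infinity in $h$ appropriately''; your Fourier-cutoff construction $\hat h_n := \chi_n\,\hat g$ makes this explicit and supplies the missing convergence estimates.
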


\begin{proof}
Since $h \in \cS(\RR)$ by (\ref{A2}), given $g$ satisfying (\ref{A7}), we may choose the `tail to infinity' in $h$
appropriately so that \eqref{A5} and \eqref{A5b} hold. 
\qed
\end{proof}

\begin{remark}
\label{ReA1}
We use \eqref{A3} in the main text under conditions of Corollary \ref{CorA1}. Since conditions (\ref{A2}) and (\ref{A7})
are mutually excludent by the Paley-Wiener theorem (see e.g. \cite{20}, Exercise 8 of Chap. 10), the  density 
argument in Theorem \ref{ThA1} is important for the application we make of \eqref{A3} in Section~\ref{III}. 
\end{remark}

\begin{remark}
\label{ReA2}
For the proof of positivity of the middle term in (85) and other questions related to the Bogoliubov scalar product, see (\cite{16}, Vol.~II, pg.~334).
For some inequalities in statistical mechanics for $W^*$-systems, see~\cite{21}.
\end{remark}

\section{Appendix B}
\label{VI}

We state here, for the reader's convenience, the theorem (partition of unity) used in Theorem~\ref{Th1} of Section~\ref{III}:

\begin{theorem}
\label{ThB1}
{\rm (see \cite{22}, Theorem, p.~61, Chap.~I, Sect.~12).} Let $G$ be an open set of $\RR^n$, and let a family of open
sets $\{ G_i \mid i \in I \}$ cover $G$, i.e., $G = \bigcup_{i \in I} G_i $. Then there exists a system of functions 
$\{ \alpha_j (x) \mid j \in J \}$ of $C_0^\infty (\RR^n)$ such that 
\begin{itemize}
\item [{\bf B1}] for every $j \in J$, $\supp (\alpha_j) $ is contained in some $G_i$;
\item [{\bf B2}] for every $j \in J$, the function $ \alpha_j$ satisfies $0 \le \alpha_j (x) \le 1$ for all $x \in \RR^n$;
\item [{\bf B3}] $ \sum_{j \in J}\alpha_j =1 $ for $x \in G$.
\end{itemize}
\end{theorem}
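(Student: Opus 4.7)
The plan is to produce $\{\alpha_j\}$ in four stages: extract a countable subcover, promote it to a locally finite refinement, attach smooth nonnegative bumps, and normalize.

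For the first two stages, for each $x \in G$ pick $i(x) \in I$ with $x \in G_{i(x)}$ and choose $r_x > 0$ so small that $\overline{B(x, 2r_x)} \subset G_{i(x)} \cap G$. The balls $\{ B(x, r_x) \}_{x \in G}$ cover $G$, and by second countability of $\RR^n$ a countable subcollection $\{ B_k = B(x_k, r_k) \}_{k \in \nn}$ already suffices. Next, exhaust $G$ by compacta: write $G = \bigcup_{m \ge 1} K_m$ with $K_m$ compact and $K_m \subset {\rm int}(K_{m+1})$, possible since an open subset of $\RR^n$ is $\sigma$-compact. For each $m$ the compact shell $K_m \setminus {\rm int}(K_{m-1})$ is covered by finitely many $B_k$ whose centers may be taken to lie in the slightly larger shell $K_{m+1} \setminus {\rm int}(K_{m-2})$. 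Concatenating these finite selections over $m$ yields a locally finite family $\{ V_j \}_{j \in J}$ still covering $G$, and by construction each $V_j$ sits inside some $G_{i_j}$.

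For the last two stages, to each $V_j = B(x_{k_j}, r_{k_j})$ associate a standard bump $\psi_j \in C_0^\infty(\RR^n)$ with $\psi_j \ge 0$, $\supp(\psi_j) \subset \overline{B(x_{k_j}, 2 r_{k_j})} \subset G_{i_j}$, and $\psi_j > 0$ on $V_j$. Set $\Psi := \sum_{j \in J} \psi_j$; local finiteness makes this a $C^\infty$ function on $\RR^n$, and because the $V_j$ cover $G$ one has $\Psi > 0$ on $G$. Define $\alpha_j := \psi_j / \Psi$ on the open set $\{ \Psi > 0 \}$, which contains $\supp(\psi_j)$, and extend $\alpha_j$ by zero outside $\supp(\psi_j)$. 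Then $\alpha_j \in C_0^\infty(\RR^n)$ with $\supp(\alpha_j) \subset G_{i_j}$ (property {\bf B1}), $0 \le \alpha_j \le 1$ (property {\bf B2}), and $\sum_{j \in J} \alpha_j = \Psi/\Psi = 1$ on $G$ (property {\bf B3}), the sum being locally finite.

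The main obstacle is the step producing the locally finite refinement $\{ V_j \}$: without it the defining sum for $\Psi$ need not even be pointwise finite, let alone smooth, and the normalization would break down. This is essentially the paracompactness of $\RR^n$, supplied concretely above via $\sigma$-compactness of the open set $G$. Once local finiteness is secured, the smoothing and the normalization are routine constructions from a first course in differential topology.
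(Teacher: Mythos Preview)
The paper does not actually prove this theorem: Appendix~B merely states it, with attribution to Yoshida~\cite{22}, ``for the reader's convenience.'' So there is no in-paper proof to compare against. Your sketch is the standard textbook construction (second-countable/$\sigma$-compact refinement to a locally finite cover, then bumps, then normalize), and it is correct in outline; in particular, because you arranged $\supp(\psi_j)\subset G$ and $\Psi>0$ on all of $G$, the quotient $\psi_j/\Psi$ is smooth on an open set containing $\supp(\psi_j)$, so the extension by zero really does give $\alpha_j\in C_0^\infty(\RR^n)$. The only place the sketch is a bit loose is the sentence about centers lying in the ``slightly larger shell'': what you actually need for local finiteness is that the \emph{closures} $\overline{B(x_{k_j},2r_{k_j})}$ meet only finitely many of the shells $K_{m+1}\setminus K_{m-2}$, which you get by shrinking $r_x$ appropriately before selecting the finite subcover on each shell. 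That is implicit in your choice $\overline{B(x,2r_x)}\subset G$, but if you wanted a fully rigorous write-up you would want to impose in addition $\overline{B(x,2r_x)}\subset {\rm int}(K_{m+1})\setminus K_{m-2}$ for $x$ in the $m$-th shell.
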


\end{document}